\newtheorem{theorem}{Theorem}
\newtheorem{lemma}{Lemma}
\newtheorem{proposition}{Proposition}
\newtheorem{corollary}{Corollary}
\newtheorem{assumption}{Assumption}
\newtheorem{definition}{Definition}
\newtheorem{conjecture}{Conjecture}
\newenvironment{proof}
  {\bigskip\noindent\textbf{Proof~}\itshape}
  {\marginpar{$\Box$}\bigskip}
\newcommand{\qed}
{\nobreak \ifvmode \relax \else
 \ifdim\lastskip<1.5em \hskip-\lastskip
 \hskip1.5em plus0em minus0.5em \fi}
\newcommand{\ie}{{i.e.}}
\newcommand{\eg}{{e.g.}}
\begin{document}

\title{Network Tomography Based on Additive Metrics}

\author{Jian Ni \hspace{0.1in} Sekhar Tatikonda \\ Department of Electrical Engineering, Yale University, New Haven, CT, USA}

\maketitle

\begin{abstract}

Inference of the network structure (e.g., routing topology) and dynamics (e.g., link performance)
is an essential component in many network design and management tasks. In this paper we propose a
new, general framework for analyzing and designing routing topology and link performance inference
algorithms using ideas and tools from phylogenetic inference in evolutionary biology. The framework
is applicable to a variety of measurement techniques. Based on the framework we introduce and
develop several polynomial-time distance-based inference algorithms with provable performance. We
provide sufficient conditions for the correctness of the algorithms. We show that the algorithms
are consistent (return correct topology and link performance with an increasing sample size) and
robust (can tolerate a certain level of measurement errors). In addition, we establish certain
optimality properties of the algorithms (i.e., they achieve the optimal $l_\infty$-radius) and
demonstrate their effectiveness via model simulation.

\end{abstract}

\begin{index}
Network tomography, routing topology inference, link performance estimation, additive metrics,
neighbor-joining.
\end{index}

\section{Introduction}

Network tomography (network inference) \cite{NetworkTomography,InternetTomography,Vardi} is an
emerging field in communication networks which studies the estimation and inference of the network
structure and dynamics (\eg, routing topology, link performance, traffic demands) based on
\emph{indirect} measurements when \emph{direct} measurements are unavailable or difficult to
collect. As modern communication networks (\eg, the Internet, wireless communication networks)
continue to grow in size, complexity, and diversity, scalable and accurate network inference
algorithms and tools will become increasingly important for many network design and management
tasks. These include service provision and resource allocation, traffic engineering, network
monitoring, application design, etc.

In \emph{network monitoring}, such tools can help a network operator obtain routing information and
network internal characteristics (\eg, loss rate, delay, utilization) from its network to a set of
other collaborating networks that are separated by non-participating autonomous networks. If the
performance of a certain portion of the network experiences sudden, dramatic changes, it can be an
indication of failures or anomalies occurred in that portion of the network.

In \emph{application design}, such tools can be particularly useful for peer-to-peer (P2P) style
applications where a node communicates with a set of other nodes (called \emph{peers}) for file
sharing and multimedia streaming. For example, a node may want to know the routing topology to
other nodes so that it can select peers with low or no route overlap to improve resilience against
network failures (\eg, \cite{RON}). As another example, a streaming node using multi-path may want
to know both the routing topology and link loss rates so that the selected paths have low loss
correlation (\eg, \cite{AKMS}).

So far there are two primary approaches to infer the routing topology and link performance of a
communication network. An \emph{internal-assisted} approach uses tools based on measurements or
feedback messages of the internal nodes (\eg, routers). Such an approach is limited as today's
communication networks are evolving towards more decentralized and private adminstration. For
example, a common approach to infer the routing topology in the Internet is to use
\emph{traceroute}. However, an increasing number of routers in the Internet will block traceroute
requests due to privacy and security concerns. These routers are known as \emph{anonymous routers}
\cite{AnonymousRouters} and their existence makes the routing topology inferred by traceroute
inaccurate. In addition, administrators of different networks normally will not reveal or share
their link-level measurement data for us (\eg, end hosts) to derive the link performance.

Not depending on extra cooperation from the internal nodes (except the basic packet forwarding
functionality), a \emph{network tomography} approach utilizes end-to-end packet probing
measurements (such as packet loss and delay measurements) conducted by the end hosts to infer the
routing topology and link performance. Under a network tomography approach, a source node will send
probes to a set of destination nodes. The basic idea is to utilize the correlations among the
observed losses and delays of the probes at the destination nodes to infer the routing topology and
link performance from the source node to the destination nodes. Due to its flexibility and
reliability, network tomography has attracted many recent studies. Both multicast probing based
approaches (\eg, \cite{MulticastLoss}, \cite{MulticastTopology}, \cite{NTisit06},
\cite{MulticastDelay}, \cite{Grouping}) and unicast probing based approaches (\eg,
\cite{UnicastLoss}, \cite{UnicastTopology}, \cite{StripedUnicast}, \cite{HierarchicalUnicast},
\cite{UnicastDelay}) have been investigated.

The main challenges of existing approaches and techniques include
\begin{itemize}
\item \textbf{computational complexity};

\item \textbf{information fusion}: how to fuse information from different measurements to achieve
the best estimation accuracy;

\item \textbf{probing scalability} (especially under unicast probing);

\item \textbf{node dynamics}: how to handle dynamic node joining and leaving efficiently.
\end{itemize}

In this paper we propose a new, general framework for designing and analyzing topology and link
performance inference algorithms using ideas and tools from phylogenetic inference in evolutionary
biology. The framework is built upon additive metrics. Under an additive metric the path metric
(path length) is expressed as the summation of the link metrics (link lengths) along the path. The
basic idea is to use (estimated) distances between the terminal nodes (end hosts) to infer the
routing tree topology and link metrics. Based on the framework we introduce and develop several
computationally efficient inference algorithms with provable performance.

The advantages of our framework are summarized as follows.

\begin{itemize}

\item The framework is applicable to a variety of measurement techniques, including multicast
probing, unicast probing, and traceroute probing. Since a linear combination of different additive
metrics is still an additive metric, the framework can flexibly fuse information available from
different measurements to achieve better estimation accuracy.

\item Based on the framework we can design, analyze, and develop distance-based inference
algorithms that are \emph{computationally efficient} (polynomial-time), \emph{consistent} (return
correct topology and link performance with an increasing sample size), and \emph{robust} (can
tolerate a certain level of measurement errors).

\end{itemize}

We organize the paper as follows. In Section \ref{sec:NTmodel} we describe the network model and
the inference problem. In Section \ref{sec:NTadditive} we introduce additive metrics on trees, and
we discuss how to construct additive metrics and compute/estimate the distances between the
terminal nodes from end-to-end measurements. In Section \ref{sec:NTNJ} we introduce the
neighbor-joining (NJ) algorithm for constructing binary trees from distances. In Section
\ref{sec:NTRNJ} we propose a rooted version of the NJ algorithm and extend it to general trees. In
Section \ref{sec:NTsimulation} we demonstrate the effectiveness of the inference algorithms via
model simulation. In Section \ref{sec:NTMSSD} we extend our framework to infer the routing topology
and link performance from multiple source nodes to a single destination node. We summarize the
paper in Section \ref{sec:NTsummary}.

\section{Network Model and Inference Problem} \label{sec:NTmodel}

Let $\mathcal{G}=(\mathcal{V},\mathcal{E})$ denote the topology of the network, which is a directed
graph with node set $\mathcal{V}$ (end hosts, internal switches and routers, etc.) and link set
$\mathcal{E}$ (communication links that join the nodes). For any nodes $i$ and $j$ in the network,
if the underlying routing algorithm returns a sequence of links that connect $j$ to $i$, we say $j$
is \emph{reachable} from $i$. We call this sequence of links a \emph{path} from $i$ to $j$, denoted
by $\mathcal{P}(i,j)$. We assume that during the measurement period, the underlying routing
algorithm determines a unique path from a node to another node that is reachable from it.

Hence the \emph{physical routing topology} from a source node to a set of destination nodes is a
(directed) tree. From the physical routing topology, we can derive a \emph{logical routing tree}
which consists of the source node, the destination nodes, and the \emph{branching nodes} (internal
nodes with at least two outgoing links) of the physical routing tree (\eg, \cite{MulticastLoss},
\cite{MulticastTopology}, \cite{Grouping}). Notice that a logical link may comprise more than one
consecutive physical links, and the degree of an internal node on the logical routing tree is at
least three. An example is shown in Fig. \ref{fig:physical}. For simplicity we use \emph{routing
tree} to express logical routing tree unless otherwise noted.

\begin{figure}[t]
\center{\psfig{file=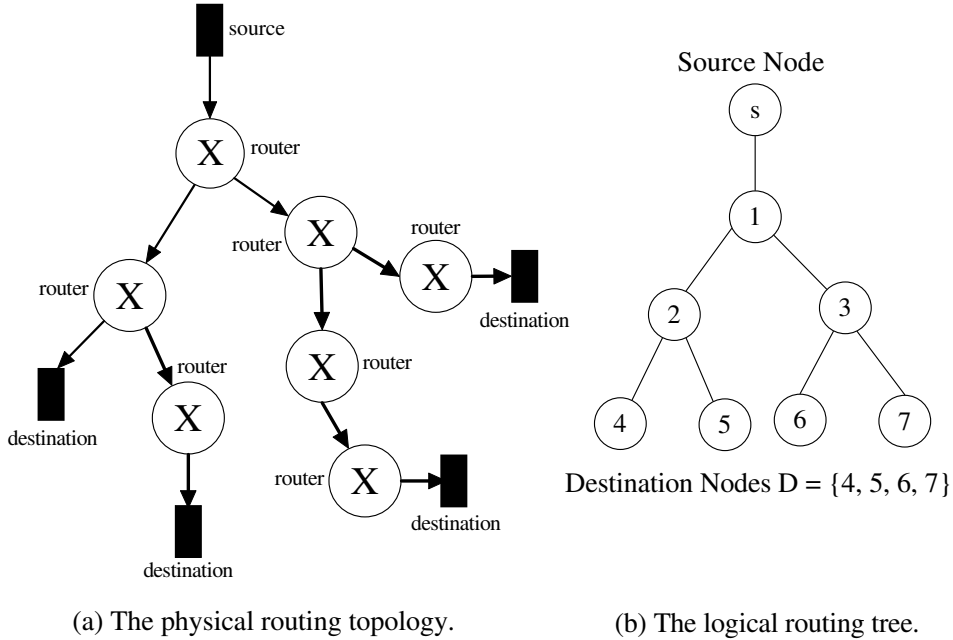, width=\columnwidth}} \caption{The physical routing
topology and the associated logical routing tree with a single source node and multiple destination
nodes.} \label{fig:physical}
\end{figure}

Suppose $s$ is a source node in the network, and $D$ is a set of destination nodes that are
reachable from $s$. Let $T(s,D)=(V,E)$ denote the routing tree from $s$ to nodes in $D$, with node
set $V$ and link set $E$. Let $U=s\cup D$ be the set of terminal nodes, which are nodes of degree
one (\eg, end hosts).

Every node $k\in V$ has a \emph{parent} $f(k)\in V$ such that $(f(k),k) \in E$, and a set of
\emph{children} $c(k)=\{j \in V: f(j)=k\}$, except that the source node (root of the tree) has no
parent and the destination nodes (leaves of the tree) have no children. For notational
simplification, we use $e_k$ to denote link $(f(k),k)$.

Each link $e\in E$ is associated with a performance parameter $\theta_e$ (\eg, success rate, delay
distribution, utilization, etc.). The \textbf{network inference problem} involves using
measurements taken at the terminal nodes to infer
\begin{itemize}
\item[(1)] the topology of the (logical) routing tree;

\item[(2)] link performance parameters $\theta_{e}$ of the links on the routing tree.
\end{itemize}

We want to point out that the network inference problem is similar to the \emph{phylogenetic
inference problem} in evolutionary biology. The phylogenetic inference problem is to determine the
evolutionary relationship among a set of species. Such relationship is often represented by a
phylogenetic tree, in which the terminal nodes represent extant species and the internal nodes
represent extinct common ancestors of the extant species. Many methods have been developed to
reconstruct phylogenetic trees from biological information (\eg, biomolecular sequence data)
observed at the terminal nodes (\eg, \cite{InferPhylogenies}, \cite{Phylogenetics}). The
mathematical models of the two problems are very similar, except that in the network inference
problem we can control and observe the source node, while in the phylogenetic inference problem the
information of the source node (the common ancestor of all species) is lost. We will use ideas and
tools from phylogenetic inference to analyze and solve the network inference problem.

\section{Additive Metrics on Trees}   \label{sec:NTadditive}

The tool that we will use to analyze and solve the network inference problem is the so-called
\emph{additive tree metric} \cite{Phylogenetics}, or \emph{additive metric} for short. We consider
trees with internal node degree at least three. Notice that all (logical) routing trees have such
property.

\begin{definition}
$d:V\times V\rightarrow \mathbb{R}^+$ is an \textbf{\emph{additive metric}} on $T=(V,E)$ if
\begin{eqnarray*}
\begin{split}
(a)\phantom{aa} &  0 < d(e) < \infty, \phantom{aa}\forall  e=(i,j)\in E; \\
(b)\phantom{aa} &  d(i,j) = d(j,i) = \left\{
\begin{array}{ll}
\sum_{e \in \mathcal{P}(i,j)}d(e), & i\neq j; \\
0, &  i=j.
\end{array}\right.
\end{split}
\end{eqnarray*}
\end{definition}

$d(e)$ can be viewed as the \emph{length} of link $e$, and $d(i,j)$ can be viewed as the
\emph{distance} between nodes $i$ and $j$. Basically, an additive metric associates each link on
the tree with a finite positive link length, and the distance between two nodes on the tree is the
summation of the link lengths along the path that connects the two nodes.

Suppose $T(s,D)=(V,E)$ is a routing tree with source node $s$ and destination nodes $D$. Let
\begin{eqnarray*}
d(E) & = & \Big\{d(e): e\in E\Big\}
\end{eqnarray*}
denote the link lengths of $T(s,D)$ under additive metric $d$.

Remember $U=s\cup D$ is the set of terminal nodes on the tree. Let
\begin{eqnarray*}
d(U^2) & = & \Big\{d(i,j): i, j\in U\Big\}
\end{eqnarray*}
denote the distances between the terminal nodes.

Buneman \cite{Buneman} showed that the topology and link lengths of a tree are uniquely determined
by the distances between the terminal nodes under an additive metric.

\begin{theorem} \label{theorem:Buneman}
There is a one-to-one mapping between $(T(s,D),d(E))$ and $(U,d(U^2))$ under any additive metric
$d$ on $T(s,D)$.
\end{theorem}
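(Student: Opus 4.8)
The plan is to exhibit the forward map $\Phi:(T(s,D),d(E))\mapsto (U,d(U^2))$ and to prove it is a bijection by showing that both the topology and the link lengths can be uniquely recovered from $d(U^2)$. One direction is immediate: given the tree together with its link lengths, part $(b)$ of the definition of an additive metric determines every terminal distance $d(i,j)$ as the sum of the link lengths along $\mathcal{P}(i,j)$, so $\Phi$ is well defined. The entire content of the theorem is therefore the injectivity of $\Phi$, \ie, that no two distinct labelled trees-with-lengths can yield the same collection $d(U^2)$; equivalently, that $(T(s,D),d(E))$ is reconstructible from $d(U^2)$.

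My main tool would be the \emph{Gromov product}: for three terminal nodes $i,j,k$ the quantity $\tfrac12\big(d(i,k)+d(j,k)-d(i,j)\big)$ equals $d(m,k)$, where $m$ is the unique node at which the paths $\mathcal{P}(i,j)$, $\mathcal{P}(i,k)$, $\mathcal{P}(j,k)$ meet. This follows directly from additivity by routing each pairwise distance through $m$. These quantities depend on $d(U^2)$ alone, and they are the bridge between the metric and the tree.

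I would then reconstruct $(T,d(E))$ by induction on $n=|U|$. The base case ($n=2$, a single link of length $d(i,j)$; or $n=3$, the star whose three link lengths are the three Gromov products) is direct. For the inductive step I would first show that $d(U^2)$ determines a \emph{cherry} --- a pair of terminal nodes $i,j$ sharing a common neighbour $v$ --- via the intrinsic criterion that $\{i,j\}$ is a cherry precisely when $\tfrac12\big(d(i,j)+d(i,k)-d(j,k)\big)$ is independent of $k$, in which case its common value is the length of link $e_i=(v,i)$. Having located such a cherry, I would collapse it: delete $i,j$, promote $v$ to a terminal node, and set $d(v,k)=\tfrac12\big(d(i,k)+d(j,k)-d(i,j)\big)$. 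This yields an additive metric on the $(n-1)$-element terminal set $U'=(U\setminus\{i,j\})\cup\{v\}$, to which the induction hypothesis applies, recovering the smaller tree and its lengths uniquely; re-attaching $i$ and $j$ to $v$ with the two recovered pendant lengths then reconstructs $(T,d(E))$ uniquely. Because $s\in U$ is a distinguished terminal node, recovering the unrooted tree and rooting it at $s$ recovers the directed routing tree $T(s,D)$.

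I expect the main obstacle to be the cherry step: proving that the distances \emph{force} a cherry to exist and that the intrinsic criterion above is both necessary and sufficient, so that every tree realizing $d(U^2)$ must share the same cherry and the collapse is consistent across all such realizations. Here the assumption that internal nodes have degree at least three is essential --- it is what prevents other leaves from branching off the path $\mathcal{P}(i,j)$ between $v$ and either endpoint --- and the \emph{four-point condition} does the supporting work: for any four terminals $i,j,k,l$ the two largest of $d(i,j)+d(k,l)$, $d(i,k)+d(j,l)$, $d(i,l)+d(j,k)$ are equal, which pins down the induced quartet topology and hence the local branching structure. Assembling these quartet constraints into a single globally consistent cherry is the delicate combinatorial point. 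Once the cherry and the collapse are justified, the remaining link-length bookkeeping is routine, and the induction closes, giving the claimed bijection.
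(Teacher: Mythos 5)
The paper never proves this theorem: it is stated as a known result with a citation to Buneman \cite{Buneman}, so there is no internal proof to match your proposal against, and it must stand on its own. Its skeleton is the standard phylogenetic reconstruction argument (three-point/Gromov values, cherry detection, induction by collapsing a cherry), and several ingredients are correct and correctly justified: the three-point formula, the reduced-distance formula $d(v,k)=\tfrac12\bigl(d(i,k)+d(j,k)-d(i,j)\bigr)$, the role of the degree-at-least-three hypothesis in the cherry criterion, and rooting at the distinguished terminal $s$ to recover the directed tree. Moreover, the step you single out as delicate --- forcing a cherry to exist and proving the criterion exact --- is in fact unproblematic: existence follows from a longest-path argument, and exactness from the degree hypothesis exactly as you indicate.

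The genuine gap is elsewhere: the collapse step does not stay inside the class of trees your induction hypothesis covers, and this bites precisely for the general (non-binary) trees this paper considers. The theorem concerns trees in which the terminal set $U$ is exactly the set of degree-one nodes and every internal node has degree at least three; that is the class your induction quantifies over. If the cherry's common neighbour $v$ has degree at least four (a third leaf at $v$, or an additional internal child --- both occur in general routing trees), then after deleting $i,j$ and ``promoting $v$ to a terminal node,'' $v$ still has degree at least two: the reduced instance has a labelled node that is not a leaf, so it lies outside the class and the induction hypothesis cannot be invoked on $U'$. Worse, the injectivity argument becomes circular at this point: for an arbitrary tree realizing $d(U^2)$, you would need uniqueness of the realization of the reduced metric over the larger class of semi-labelled trees (where $v$ may be internal), and ruling out an alternative in-class realization in which $v$ sits at the end of a positive-length pendant edge is essentially the uniqueness you are trying to prove. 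For binary trees, where the cherry parent always has degree exactly three, your induction closes as written; for the general case you must either strengthen the induction hypothesis to semi-labelled ($X$-)trees as in \cite{Phylogenetics}, or first identify the entire set of leaves adjacent to $v$ (pairwise cherries sharing a common three-point value) and case-split on what remains of $v$ after their removal, re-locating $v$ inside the smaller tree via its distances to the surviving terminals. These repairs are standard, but as written the inductive step fails.
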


From Theorem \ref{theorem:Buneman}, we know that we can recover the topology and link lengths of a
routing tree if we know $d(U^2)$. In addition, if there is a one-to-one mapping between the link
performance parameters and link lengths (which will be clear in Section \ref{sec:NTmulticast}),
then we can recover the link performance parameters from the link lengths. The challenges are:
\begin{itemize}
\item[(1)] Constructing an additive metric for which we can derive/estimate $d(U^2)$ from
measurements taken at the terminal nodes. We will address this issue in this section.

\item[(2)] Developing efficient and effective algorithms to recover the topology and link lengths
from the (estimated) distances between the terminal nodes. We will address this issue in Sections
\ref{sec:NTNJ}, \ref{sec:NTRNJ}.
\end{itemize}

\subsection{Construct Additive Metrics} \label{sec:NTmulticast}

A source node can employ different probing techniques, \eg, \emph{multicast} probing and
\emph{unicast} probing, to send probes (packets) to a set of destination nodes. For multicast
probing, when an internal node on the routing tree receives an packet from its parent, it will
duplicate the packet and send a copy of the packet to all its children on the tree. Therefore, the
packets received by different destination nodes have exactly the same network experience (loss,
delay, etc.) in the shared links.

For a (multicast) probe sent by source node $s$ to the destination nodes in $D$, we define a set of
\emph{link state variables} $Z_e$ for all links $e\in E$ on the routing tree $T(s,D)$. $Z_e$ takes
value in a state set $\mathcal{Z}$. The distribution of $Z_e$ is parameterized by $\theta_{e}$,
\eg,
\begin{eqnarray}
\mathbb{P}(Z_e=z) & = & \theta_e(z), \phantom{aa} \forall z\in \mathcal{Z}.
\end{eqnarray}

The transmission of a probe from $s$ to nodes in $D$ will induce a set of \emph{outcome variables}
on the routing tree. For each node $k\in V$, we use $X_{k}$ to denote the (random) outcome of the
probe at node $k$. $X_k$ takes value in an outcome set $\mathcal{X}$. By \emph{causality} the
outcome of the probe at node $k$ (\ie, $X_k$) is determined by the outcome of the probe at node
$k$'s parent $f(k)$ (\ie, $X_{f(k)}$) and the link state of $e_k$ (\ie, $Z_{e_k}$):
\begin{eqnarray}\label{eq:outcome}
X_k & = & g(X_{f(k)}, Z_{e_k}).
\end{eqnarray}

\begin{assumption} \label{assump:independence}
The link states are independent from link to link (spatial independence assumption) and are
stationary during the measurement period (stationarity assumption).
\end{assumption}

\begin{proposition} \label{prop:MRF}
Under the spatial independence assumption that the link states are independent from link to link,
\begin{eqnarray}
X_V & \stackrel{\Delta}{=} & (X_{k}: k\in V)
\end{eqnarray}
is a Markov random field (MRF) on $T(s,D)$. Specifically, for each node $k\in V$, the conditional
distribution of $X_{k}$ given other random variables $(X_{j}: j\neq k)$ on $T(s,D)$ is the same as
the conditional distribution of $X_{k}$ given just its neighboring random variables $(X_{j}: j \in
f(k)\cup c(k))$ on $T(s,D)$.
\end{proposition}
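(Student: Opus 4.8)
The plan is to exploit the directed recursion \eqref{eq:outcome} together with the spatial independence of the link states to show that the joint law of $X_V$ factorizes over the edges of $T(s,D)$, and then to read off the local Markov property from that factorization. First I would fix a topological ordering of $V$ in which every node appears after its parent (such an ordering exists because $T(s,D)$ is a tree rooted at $s$). The recursion $X_k=g(X_{f(k)},Z_{e_k})$ expresses each $X_k$ as a deterministic function of $X_{f(k)}$ and the single link-state variable $Z_{e_k}$. Since by Assumption \ref{assump:independence} the variables $\{Z_{e_k}\}_{k}$ are mutually independent, and since each $X_j$ is a function only of the root state $X_s$ and of the link states along the path from $s$ to $j$, the variable $Z_{e_k}$ is independent of $X_{f(k)}$ and of every $X_j$ preceding $k$ in the ordering (no such generating path contains the edge $e_k$). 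Conditioned on $X_{f(k)}$, the variable $X_k$ is therefore independent of all previously ordered variables, and the chain rule collapses to
\begin{eqnarray*}
\mathbb{P}(X_V=x_V) & = & \mathbb{P}(X_s=x_s)\prod_{k\in V\setminus\{s\}}\mathbb{P}\big(X_k=x_k\mid X_{f(k)}=x_{f(k)}\big).
\end{eqnarray*}
Each factor depends on the states of exactly one edge $e_k=(f(k),k)$, so this is a Gibbs factorization whose potentials are supported on the vertex and edge cliques of $T(s,D)$.

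Given this factorization, I would establish the stated local Markov property by direct cancellation rather than by invoking the general Hammersley--Clifford theorem. Fixing a node $k$ and forming $\mathbb{P}(X_k=x_k\mid X_j=x_j,\,j\neq k)$ by dividing the joint mass function by the marginal of $(X_j:j\neq k)$, the only factors that survive and still involve $x_k$ are the parent transition $\mathbb{P}(X_k=x_k\mid X_{f(k)}=x_{f(k)})$ and the child transitions $\mathbb{P}(X_j=x_j\mid X_k=x_k)$ for $j\in c(k)$; every other factor is free of $x_k$ and cancels against the normalization. Hence the conditional law of $X_k$ depends on the conditioning variables only through $x_{f(k)}$ and $(x_j:j\in c(k))$, which are precisely the neighbors $f(k)\cup c(k)$ of $k$ in $T(s,D)$. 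This is the asserted equality of conditional distributions, and the boundary cases (the root $s$, for which $f(k)$ is absent, and the leaves in $D$, for which $c(k)$ is empty) are covered automatically.

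The step I expect to be the main obstacle is justifying the independence that drives the factorization: one must argue carefully that $Z_{e_k}$ is independent of the entire collection of earlier outcome variables, not merely of $X_{f(k)}$, and it is here that the full (joint) spatial independence in Assumption \ref{assump:independence} is essential rather than mere pairwise independence of the link states. A secondary technical point is the conditioning in the cancellation argument, which should be performed on events of positive probability; in the discrete-state setting this is routine, but to be fully rigorous I would restrict attention to configurations $x_V$ lying in the support of the joint law.
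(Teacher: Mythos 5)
Your proposal is correct and takes essentially the same route as the paper: both proofs first establish the rooted-tree factorization $p(x_V)=p(x_s)\prod_{k\in V\setminus s}p(x_k|x_{f(k)})$ from the recursion $X_k=g(X_{f(k)},Z_{e_k})$ and the joint independence of the link states, and then obtain the local Markov property at each node $k$ by the identical cancellation argument, in which only the parent factor $p(x_k|x_{f(k)})$ and the child factors $p(x_j|x_k)$, $j\in c(k)$, survive. The only cosmetic difference is that the paper derives the factorization by induction on $|V|$ (peeling off a leaf), whereas you apply the chain rule along a topological ordering from the root; these are two phrasings of the same argument, resting on the same key observation that $Z_{e_k}$ is independent of every variable generated without the edge $e_k$.
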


\begin{proof}
For notational simplification, we use $p(x_A)$ to represent $\mathbb{P}(X_k=x_k:k\in A)$ for any
subset $A \subseteq V$. First we prove by induction that
\begin{eqnarray}\label{eq:tree}
p(x_V) & = & p(x_s)\prod_{k\in V\setminus s}p(x_k|x_{f(k)}).
\end{eqnarray}
Equation (\ref{eq:tree}) is clearly true for any tree with $|V|=1$ or $|V|=2$. Assume
(\ref{eq:tree}) is true for any tree with $|V|\leq n$. Now consider a tree $T$ with $|V|=n+1$.

Let $i$ be a leaf node of $T$, then by (\ref{eq:outcome}) and the spatial independence assumption
we have
\begin{eqnarray}\label{eq:induction}
p(x_V) & = & p(x_i|x_{V\setminus i})p(x_{V\setminus i}) \nonumber \\
       & = & p(g(x_{f(i)},z_{e_i})|x_{V\setminus i})p(x_{V\setminus i}) \nonumber\\
       & = & p(g(x_{f(i)},z_{e_i})|x_{f(i)})p(x_{V\setminus i}) \nonumber\\
       & = & p(x_i|x_{f(i)})p(x_{V\setminus i}).
\end{eqnarray}

$X_{V\setminus i}$ is defined on $T'=(V\setminus i, E\setminus (f(i),i))$, a tree with $n$ nodes.
By induction assumption
\begin{eqnarray*}
p(x_{V\setminus i}) & = & p(x_s)\prod_{k\in V\setminus i\setminus s}p(x_k|x_{f(k)}).
\end{eqnarray*}
Substituting it into (\ref{eq:induction}) we have shown that Equation (\ref{eq:tree}) holds for $T$
with $|V|=n+1$. By induction argument, Equation (\ref{eq:tree}) is true for any tree.

Now for any $k\in V$, from (\ref{eq:tree}) we have
\begin{eqnarray*}
p(x_V)              & = & \Big(p(x_k|x_{f(k)})\prod_{j\in c(k)}p(x_j|x_k) \Big)\cdot q(x_{V\setminus k}), \\
p(x_{V\setminus k}) & = & \sum_{x_k} \Big(p(x_k|x_{f(k)})\prod_{j\in c(k)}p(x_j|x_k)\Big)\cdot
q(x_{V\setminus k}),
\end{eqnarray*}
where $q(x_{V\setminus k})$ is a function that does not depend on $x_k$. Then
\begin{equation*}
\begin{split}
p(x_k|x_{V\setminus k})  = & \frac{p(x_V)}{p(x_{V\setminus k})} \\
                         = & \frac{ p(x_k|x_{f(k)}) \prod_{j\in c(k)}p(x_j|x_k) }{\sum_{x_k}
                        \Big(p(x_k|x_{f(k)})\prod_{j\in c(k)}p(x_j|x_k)\Big)}\\
                         = &  \frac{p(x_{f(k)})p(x_k|x_{f(k)})\prod_{j\in
c(k)}p(x_j|x_k)}{\sum_{x_k} \Big(p(x_{f(k)})p(x_k|x_{f(k)})\prod_{j\in c(k)}p(x_j|x_k)\Big)}\\
                         = & p(x_k|x_{f(k)\cup c(k)})
\end{split}
\end{equation*}

Therefore $X_V$ is a Markov random field on $T(s,D)$. \qed
\end{proof}

For an MRF $X_V=(X_k: k\in V)$ on $T(s,D)$, we can construct an additive metric as follows. For
each link $(i,j)\in E$, we define an $M\times M$ (assume $|\mathcal{X}|=M$) \emph{forward link
transition matrix} $P_{ij}$ and an $M\times M$ \emph{backward link transition matrix} $P_{ji}$ with
entries
\begin{eqnarray*}
P_{ij}(x_i,x_j) & = & \mathbb{P}(X_j=x_j|X_{i}=x_i), \\
P_{ji}(x_j,x_i) & = & \mathbb{P}(X_i=x_i|X_{j}=x_j), \\
& &  x_i,x_j\in \mathcal{X}. \nonumber
\end{eqnarray*}

If the link transition matrices are \emph{invertible} so that
$|P_{ij}|\stackrel{\Delta}{=}|\det(P_{ij})|>0$, not equal to a \emph{permutation matrix} (a matrix
with exactly one entry in each row and each column being 1 and others being 0) so that
$|P_{ij}|<1$, and there exists a node $i\in V$ with positive marginal distribution, then we can
construct an additive metric $d_0$ with link length (\eg, \cite{Barry}, \cite{JoeChang}):
\begin{eqnarray} \label{eq:linkpara0}
d_0(e) = -\log |P_{ij}| -\log |P_{ji}|,  \phantom{aa}\forall e=(i,j)\in E.
\end{eqnarray}

For any pair of terminal nodes $i,j\in U$, the distance between $i$ and $j$ under additive metric
$d_0$ can be computed by
\begin{eqnarray}\label{eq:dmetric0}
d_0(i,j) = -\log |P_{ij}| -\log |P_{ji}|, \phantom{aa} i,j\in U.
\end{eqnarray}

We can construct other additive metrics based on the specific network inference problem. We use
link loss inference as the example. Additive metrics based on link utilization inference and link
delay inference can be found in \cite{NiTopologyInference}.
\\
\emph{Example 1 (Link Loss Inference):} For this example, the link state variable $Z_{e}$ is a
Bernoulli random variable which takes value 1 with probability $\alpha_e$ if link $e$ is in
\emph{good state} and the probe can go through the link, and takes value 0 with probability
$1-\alpha_e\stackrel{\Delta}{=}\bar{\alpha}_e$ if the probe is lost on the link (\eg,
\cite{MulticastLoss}). $\alpha_e$ is called the \emph{success rate} or \emph{packet delivery rate}
of link $e$, and $\bar{\alpha}_e$ is called the \emph{loss rate} of link $e$.

The outcome variable $X_{k}$ is also a Bernoulli random variable, which takes value 1 if the probe
successfully reaches node $k$. Since the probe is sent by the source node $s$, we have $X_s \equiv
1$. It is clear that for link loss inference
\begin{equation}
X_k = X_{f(k)}\cdot Z_{e_k}=\prod_{e\in \mathcal{P}(s,k)}Z_e.
\end{equation}

If $0<\alpha_e<1$ for all links, then we can construct an additive metric $d_l$ with link length
\begin{eqnarray} \label{eq:linkpara1}
d_l(e) & = & - \log \alpha_e, \phantom{aa} \forall e\in E.
\end{eqnarray}

Notice that there is a one-to-one mapping between the link length and link success rate, hence we
can derive the link success rates from the link lengths, and vice versa.

Under the spatial independence assumption that the link states are independent from link to link,
we have
\begin{equation*}
\begin{split}
\mathbb{P}(X_i=1)    = & \mathbb{P}(\prod_{e\in \mathcal{P}(s,i)}Z_e=1) = \prod_{e\in\mathcal{P}(s,i)}\alpha_e, \\
\mathbb{P}(X_j=1)    = & \mathbb{P}(\prod_{e\in \mathcal{P}(s,j)}Z_e=1) = \prod_{e\in\mathcal{P}(s,j)}\alpha_e, \\
\mathbb{P}(X_iX_j=1) = &
\mathbb{P}(\prod_{e\in\mathcal{P}(s,\underline{ij})}Z_e\prod_{e\in\mathcal{P}(\underline{ij},i)}Z_e\prod_{e\in\mathcal{P}(\underline{ij},j)}Z_e=1)
\\
= & \prod_{e\in\mathcal{P}(s,\underline{ij})}\alpha_e
\prod_{e\in\mathcal{P}(\underline{ij},i)}\alpha_e
\prod_{e\in\mathcal{P}(\underline{ij},j)}\alpha_e,
\end{split}
\end{equation*}
where $\underline{ij}$ is the \emph{nearest common ancestor} of $i$ and $j$ on $T(s,D)$ (\ie, the
ancestor of $i$ and $j$ that is closest to $i$ and $j$ on the routing tree). For example, in Fig.
\ref{fig:physical}(b), the nearest common ancestor of destination nodes 4 and 5 is node 2, and the
nearest common ancestor of destination nodes 4 and 6 is node 1.

Therefore, the distances between the terminal nodes, $d_l(U^2)$, can be computed by
\begin{eqnarray}\label{eq:dmetric1}
d_l(i,j) = \log \frac{\mathbb{P}(X_i=1)\mathbb{P}(X_j=1)}{\mathbb{P}^2(X_iX_j=1)},
\phantom{aa}i,j\in U.
\end{eqnarray}

\subsection{Estimation of Distances}

As in (\ref{eq:dmetric0}) and (\ref{eq:dmetric1}), if we know the pairwise joint distributions of
the outcome variables at the terminal nodes, then we can construct an additive metric and derive
$d(U^2)$. In actual network inference problems, however, the joint distributions of the outcome
variables are not given. We need to estimate the joint distributions based on measurements taken at
the terminal nodes. Specifically, the source node will send a sequence of $n$ probes, and there are
totally $n$ outcomes $X_V^{(t)}=(X^{(t)}_{k}: k\in V)$, $t=1,2,...,n$, one for each probe. For the
$t$-th probe, only the outcome variables $X^{(t)}_{U}=(X_{k}^{(t)}: k\in U=s\cup D)$ at the
terminal nodes can be measured and observed. We can estimate the joint distributions of the outcome
variables using the observed empirical distributions, which will converge to the actual
distributions almost surely if the link state processes are stationary and ergodic during the
measurement period.

Suppose $s$ sends a sequence of $n$ probes to (a subset of) destination nodes in $D$. For any
probed node $i$, let $X_i^{(t)}$ be the measured loss outcome of the $t$-th probe at node $i$, with
$X_i^{(t)}=1$ if node $i$ successfully receives the probe and $X_i^{(t)}=0$ otherwise.

We use the empirical distributions of the outcome variables to estimate the distances. For a
Bernoulli random variable $X$ (as in link loss inference), the empirical probability that $X$ takes
value 1 is just the sample mean $\bar{X}$\footnote{$\bar{X}$ is the maximum likelihood estimator
(MLE) of $\mathbb{P}(X=1)$ for the samples.} of the samples $X^{(1)},..., X^{(n)}$:
\begin{eqnarray}
\hat{P}(X=1) = \bar{X}\stackrel{\Delta}{=} \frac{1}{n}\sum_{t=1}^{n} X^{(t)}.
\end{eqnarray}

We can construct explicit estimators for the distances in (\ref{eq:dmetric1}) as follows (we use
$\hat{}$ over $d$ to represent estimated distances):
\begin{eqnarray}
\hat{d}_l(i,j) & = & \log \frac{\bar{X}_i\bar{X}_j}{\overline{X_iX_j}^2}, \label{eq:Emetric1}
\end{eqnarray}
where
\begin{eqnarray*}
\bar{X}_i & = & \frac{1}{n} \sum_{t=1}^{n}X_i^{(t)}, \\
\bar{X}_j & = & \frac{1}{n} \sum_{t=1}^{n}X_j^{(t)}, \\
\overline{X_iX_j} & = & \frac{1}{n} \sum_{t=1}^{n}X_i^{(t)}X_j^{(t)}.
\end{eqnarray*}

We can derive exponential error bounds for the distance estimators in (\ref{eq:Emetric1}) using
Chernoff bounds \cite{NTlinkestimation}.

\begin{proposition} \label{prop:disterror}
For any pair of nodes $i, j \in U$, a sample size of $n$ (number of probes to estimate
$\hat{d}_l$), and any small $\epsilon>0$:
\begin{eqnarray}
\mathbb{P}\Big\{|\hat{d}_l(i,j)-d_l(i,j)|\geq \epsilon \Big\} & \leq & e^{-c_{ij}(\epsilon)n}
\end{eqnarray}
where $c_{ij}(\epsilon)$'s are some constants.
\end{proposition}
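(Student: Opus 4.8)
The plan is to exploit the fact that, for fixed terminal nodes $i,j$, the estimator $\hat{d}_l(i,j)$ is a fixed, smooth function of only three empirical averages. Writing $a=\mathbb{P}(X_i=1)$, $b=\mathbb{P}(X_j=1)$, $c=\mathbb{P}(X_iX_j=1)$ and $F(u,v,w)=\log u+\log v-2\log w$, we have $d_l(i,j)=F(a,b,c)$ and $\hat{d}_l(i,j)=F(\bar{X}_i,\bar{X}_j,\overline{X_iX_j})$. Each of $\bar{X}_i,\bar{X}_j,\overline{X_iX_j}$ is an average of $n$ i.i.d.\ variables taking values in $\{0,1\}$ (the products $X_i^{(t)}X_j^{(t)}$ are again $\{0,1\}$-valued), with means $a,b,c$ respectively. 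Because $0<\alpha_e<1$ on every link, the constants $a,b,c$ lie in the open interval $(0,1)$, so $F$ is continuously differentiable near $(a,b,c)$. The strategy is to transfer a deviation of $\hat{d}_l$ into a deviation of at least one of the three averages, and then invoke a Chernoff/Hoeffding bound on each average separately.

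First I would fix the neighbourhood. On the box $B=[\tfrac{a}{2},1]\times[\tfrac{b}{2},1]\times[\tfrac{c}{2},1]$ the gradient of $F$ is bounded, so $F$ is Lipschitz on $B$ with a constant $L=L(a,b,c)$ (e.g.\ $L=\max\{2/a,2/b,4/c\}$ in the $\ell^1$ sense). Put $\delta=\min\{\epsilon/(3L),\,a/2,\,b/2,\,c/2\}$. If all three deviations $|\bar{X}_i-a|,|\bar{X}_j-b|,|\overline{X_iX_j}-c|$ are strictly below $\delta$, then each average lies in $B$ (since $\delta\le a/2$ forces $\bar{X}_i\ge a/2$, etc.) and the Lipschitz estimate gives $|\hat{d}_l(i,j)-d_l(i,j)|\le L(|\bar{X}_i-a|+|\bar{X}_j-b|+|\overline{X_iX_j}-c|)<3L\delta\le\epsilon$. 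Taking the contrapositive, the event $\{|\hat{d}_l(i,j)-d_l(i,j)|\ge\epsilon\}$ (which automatically absorbs the degenerate cases where some average is $0$ and $\hat{d}_l$ is infinite) is contained in the union $\{|\bar{X}_i-a|\ge\delta\}\cup\{|\bar{X}_j-b|\ge\delta\}\cup\{|\overline{X_iX_j}-c|\ge\delta\}$.

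Next I would bound each of the three events by a Chernoff/Hoeffding inequality for bounded i.i.d.\ variables, as in \cite{NTlinkestimation}, obtaining $\mathbb{P}\{|\bar{X}_i-a|\ge\delta\}\le 2e^{-2n\delta^2}$ and likewise for the other two; note that the three averages are computed from the same probes and hence are dependent, but only a union bound is needed, so independence is irrelevant. A union bound then yields $\mathbb{P}\{|\hat{d}_l(i,j)-d_l(i,j)|\ge\epsilon\}\le 6e^{-2n\delta^2}$, and choosing $c_{ij}(\epsilon)$ slightly below $2\delta^2$ absorbs the leading factor for all large $n$, giving the claimed form $e^{-c_{ij}(\epsilon)n}$.

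I expect the main obstacle to be purely the bookkeeping around the logarithmic singularity: one must guarantee that the empirical averages stay bounded away from $0$ before the Lipschitz/mean-value estimate can be applied, which is exactly what the choice $\delta\le\min\{a/2,b/2,c/2\}$ enforces, and one must track how the rate $c_{ij}(\epsilon)$ degrades as $\epsilon\to0$ (through $\delta\approx\epsilon/(3L)$, so $c_{ij}(\epsilon)$ scales like $\epsilon^2$) and as $a,b,c\to0$ for distant node pairs (through $L\to\infty$). Neither difficulty is deep, but both are precisely where an otherwise routine concentration argument could be stated incorrectly.
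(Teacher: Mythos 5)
Your proposal is correct and takes essentially the same route as the paper: the paper gives no explicit proof of Proposition~\ref{prop:disterror}, stating only that the bound follows from Chernoff bounds as in \cite{NTlinkestimation}, and your Lipschitz-transfer-plus-Hoeffding/union-bound argument is exactly that standard derivation, with the degenerate zero-average cases handled properly. The only caveat, which you already flag yourself, is that the multiplicative factor $6$ means the pure exponential form $e^{-c_{ij}(\epsilon)n}$ holds only for $n$ sufficiently large (or after slightly shrinking the rate constant), which is consistent with the proposition's loose phrasing that the $c_{ij}(\epsilon)$'s are ``some constants.''
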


\subsection{Other Additive Metrics and Information Fusion}

We can also construct additive metrics and compute/estimate the distances between the terminal
nodes using (end-to-end) unicast packet pair probing or traceroute probing, as described in
\cite{NiTopologyInference}. A nice property of additive metrics is that a linear combination of
several additive metrics is still an additive metric. In order to fuse information collected from
different measurements, we can construct a new additive metric using a linear (convex) combination
of additive metrics $d_1, d_2,..., d_k$:
\begin{eqnarray}
d  =  a_1d_1+a_2d_2+...+a_kd_k, \\
\mbox{ s.t. }  a_1+a_2+...+a_k=1. \nonumber
\end{eqnarray}

The estimated distance between terminal nodes $i,j \in U$ under the new additive metric can be
easily computed:
\begin{eqnarray*}
\hat{d}(i,j) = a_1\hat{d}_1(i,j)+a_2 \hat{d}_2(i,j)+...+a_k\hat{d}_k(i,j).
\end{eqnarray*}

In practice we can select the coefficients empirically based on the current network state or to
minimize the variance of the estimator $\hat{d}$.

\section{Neighbor-Joining Algorithm}  \label{sec:NTNJ}

We have described how to construct additive metrics and estimate the distances between the terminal
nodes via end-to-end packet probing measurements. In this section we introduce the
\emph{neighbor-joining} (NJ) algorithm, which is considered the most widely used algorithm for
building binary phylogenetic trees from distances (\eg, \cite{NJRevealed}, \cite{NeighborJoining},
\cite{ProspectsNJ}).

\begin{definition}
A \textbf{\emph{distance-based tree inference algorithm}} (or distance-based algorithm for short)
takes the (estimated) distances between the terminal nodes of a tree as the input and returns a
tree topology and the associated link lengths. The input distances $\hat{d}(U^2)$ satisfy:
\begin{eqnarray*}
\hat{d}(i,j) & \geq & 0, \phantom{aa} \mbox{\emph{with equality if and only if} } i=j,\\
\hat{d}(i,j) & = & \hat{d}(j,i).
\end{eqnarray*}
\end{definition}

\begin{definition}
Two or more nodes on a tree are called \emph{\textbf{neighbors}} (\emph{\textbf{siblings}}), if
they are connected via one internal node (if they have the same parent) on the tree.
\end{definition}

The NJ algorithm is an \emph{agglomerative algorithm}. The algorithm begins with a leaf set
including all destination nodes. In each step it selects two leaf nodes that are likely to be
neighbors, deletes them from the leaf set, creates a new node as their parent and adds that node to
the leaf set. The whole process is iterated until there is only one node left in the leaf set,
which will be the child of the root (source node).

To avoid trivial cases, we assume $|D|\geq 2$.
\\\\
\setlength{\unitlength}{1mm}
\begin{picture}(89,0)
\put(0,0){\line(1,0){89}}  \thicklines
\end{picture}
\\
 \emph{Algorithm 1: Neighbor-Joining (NJ) Algorithm for Binary Trees}
\\
\setlength{\unitlength}{1mm}
\begin{picture}(89,0)
\put(0,0){\line(1,0){89}}  \thicklines
\end{picture}\\
\textbf{Input:} Estimated distances between the nodes in $U=s\cup D$, $\hat{d}(U^2)$.
\begin{itemize}
\item[1.] $V=\emptyset$, $E=\emptyset$.

\item[2.1] For any pair of nodes $i,j\in D$, compute
          \begin{eqnarray} \label{eq:NJQfun}
          \hat{Q}(i,j) = \sum_{k\in U}\hat{d}(i,k)+\sum_{k\in U}\hat{d}(j,k)-(|U|-2)\hat{d}(i,j).
          \end{eqnarray}

\item[2.2] Find $i^*,j^*\in D$ with the largest $\hat{Q}(i,j)$ (break the tie arbitrarily).\\
Create a node $f$ as the parent of $i^*$ and $j^*$.\\
$D = D\setminus \{i^*,j^*\}$, $U = U\setminus \{i^*,j^*\}$\\
$V = V\cup \{i^*,j^*\}$, $E = E\cup \{(f, i^*),(f, j^*)\}$.

\item[2.3] Compute the link lengths from the distances:
\begin{eqnarray}
\hat{d}(f,i^*) = \frac{1}{|U|}\sum_{k\in U}
\big[\hat{d}(k,i^*)+\hat{d}(i^*,j^*)-\hat{d}(k,j^*)\big]/2, \\
\hat{d}(f,j^*) = \frac{1}{|U|}\sum_{k\in U}
\big[\hat{d}(k,j^*)+\hat{d}(i^*,j^*)-\hat{d}(k,i^*)\big]/2.
\end{eqnarray}

\item[2.4] For each $k\in U$, compute the distance between $k$ and $f$:
\begin{eqnarray}
\hat{d}(k,f) = \frac{1}{2}\big[\hat{d}(k,i^*)-\hat{d}(f,i^*)\big] +
\frac{1}{2}\big[\hat{d}(k,j^*)-\hat{d}(f,j^*)\big].
\end{eqnarray}
$D = D \cup f$, $U = U \cup f$.

\item[3.] If $|D|=1$, for the $i \in D$: $V=V\cup \{i\}$, $E=E\cup(s,i)$.
\\Otherwise, repeat Step 2.

\end{itemize}
\textbf{Output:} Tree $\hat{T}=(V,E)$, and link lengths $\hat{d}(e)$ for all $e\in E$.\\
\setlength{\unitlength}{1mm}
\begin{picture}(89,0)
\put(0,0){\line(1,0){89}}  \thicklines
\end{picture}
\\

The NJ algorithm has several nice properties:
\begin{itemize}
\item it is computationally efficient, with a polynomial-time complexity $O(N^3)$ for (binary)
trees with $N$ terminal nodes;

\item it returns the correct tree topology and link lengths if the input distances are
\emph{additive} (\ie, if the input distances are derived from an additive metric without estimation
errors);

\item it is robust: it achieves the optimal $l_\infty$-\emph{radius} among all distance-based
algorithms for binary trees.
\end{itemize}

The $l_\infty$-radius notation was introduced by Atteson \cite{Atteson}.

\begin{definition}
For a distance-based algorithm, we say it has $l_\infty$-\textbf{\emph{radius}} $r$, if for any
tree $T$ associated with any additive metric $d$, whenever the input distances between the terminal
nodes, $\hat{d}(U^2)$, satisfy:
\begin{eqnarray}
||\hat{d}(U^2)-d(U^2)||_\infty & \stackrel{\Delta}{=} & \max_{i,j\in U}|\hat{d}(i,j)-d(i,j)|
\nonumber \\
& < & r\min_{e\in E}d(e),
\end{eqnarray}
the algorithm will return the correct topology of $T$.
\end{definition}

An algorithm with larger $l_\infty$-radius is more robust, because it can tolerate more estimation
errors. \cite{Atteson} showed that no distance-based algorithm has $l_\infty$-radius larger than
$\frac{1}{2}$ via an example, and proved that the NJ algorithm in fact achieves the optimal
$l_\infty$-radius for binary trees.

\begin{theorem}
The NJ algorithm achieves the optimal $l_\infty$-radius $\frac{1}{2}$ for binary trees.
\end{theorem}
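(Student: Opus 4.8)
The statement has two halves: that no distance-based algorithm can exceed radius $\frac{1}{2}$, and that NJ attains it. The first half is already supplied by the example of \cite{Atteson} quoted above, so the plan is to concentrate on the second, namely that whenever $\|\hat{d}(U^2)-d(U^2)\|_\infty < \frac{1}{2}\min_{e\in E}d(e)$ for an additive metric $d$ on a binary tree $T$, Algorithm~1 returns $T$. By Theorem~\ref{theorem:Buneman} it suffices to get the topology right, and since the algorithm is agglomerative I would reduce this to one claim that I then iterate: \emph{at every iteration the pair $(i^*,j^*)$ maximizing $\hat{Q}$ is a genuine pair of siblings (a cherry) of the current true tree}. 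Once a correct cherry is joined, reconstruction continues on a smaller tree, so the whole argument is an induction on $|U|$.

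First I would analyze the idealized case $\hat{d}=d$. Writing $n=|U|$, $r_i=\sum_{k\in U}d(i,k)$, and letting $u_i$ be the internal node at which the pendant edge of leaf $i$ attaches, additivity (each $d(i,k)=e_i+d(u_i,k)$ with $e_i=d(i,u_i)$, whence $r_i=(n-2)e_i+\Sigma_{u_i}$) gives
\[
Q(i,j) = r_i + r_j - (n-2)\,d(i,j) = \Sigma_{u_i} + \Sigma_{u_j} - (n-2)\,d(u_i,u_j),
\]
where $\Sigma_u=\sum_{k\in U}d(u,k)$. The key point is that the criterion depends only on the attachment nodes, not on the pendant lengths: a cherry has $u_i=u_j=p$ and $Q=2\Sigma_p$, whereas a non-cherry pays the penalty $(n-2)d(u_i,u_j)>0$. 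I would then prove (the classical Saitou--Nei / Studier--Keppler fact) that the maximum of $Q$ over all leaf pairs is attained at a cherry, and extract a quantitative \emph{margin}: the best cherry beats any non-cherry by an amount bounded below in terms of $\mu=\min_{e\in E}d(e)$ and the number of remaining leaves.

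Next I would pass from $d$ to $\hat{d}$ by perturbation. The naive bound $|\hat{Q}(i,j)-Q(i,j)|\le (n-2)\delta + 2(n-1)\delta$, with $\delta=\|\hat{d}-d\|_\infty$, grows with $n$ and is useless by itself, so the real work is to control the \emph{difference} between the cherry's $\hat{Q}$ and a competitor's $\hat{Q}$, in which most per-leaf error terms enter both values through the common sums $\sum_k\hat{d}(\cdot,k)$ and cancel. Carrying this out should show that when $\delta<\frac{1}{2}\mu$ the total admissible perturbation is strictly smaller than half the true margin from the previous step, so the maximizer of $\hat{Q}$ is still a cherry; the threshold $\frac{1}{2}$ is precisely the value at which the perturbation bound meets the margin, which is what makes the radius tight rather than merely positive.

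The hard part will be the induction step: certifying that after a correct cherry $(a,b)$ is merged into its parent $p=f$ the reduced problem again satisfies the hypotheses. The reduced true metric $d'$ is just the restriction of $d$ to the contracted tree $T'$, whose smallest edge satisfies $\mu(T')\ge\mu$ because only the two pendant edges of $a,b$ are removed, which is favorable. The difficulty is that the estimated distances to the new leaf, $\hat{d}(k,p)=\frac{1}{2}[\hat{d}(k,a)+\hat{d}(k,b)-\hat{d}(a,b)]$, can carry error as large as $\frac{3}{2}\delta$, so the clean invariant $\|\hat{d}-d\|_\infty<\frac{1}{2}\mu$ is not literally reproduced on $T'$ and a naive induction degrades the constant at every step. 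Resolving this is the crux: one must either track the sequence of joins directly against the \emph{original} metric, showing the whole elimination order is consistent with $T$, or give a sharper accounting in which the enlarged reduced edges and the cancellations identified above keep the effective margin above the (now larger) perturbation. I expect essentially all the technical weight of the theorem to sit in this reconciliation.
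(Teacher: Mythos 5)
There is a genuine gap, and you have in fact named it yourself. Your proposal is a plan rather than a proof: the two statements that carry all of its weight are asserted but never established. Your algebra is fine as far as it goes --- the identity $Q(i,j)=\Sigma_{u_i}+\Sigma_{u_j}-(n-2)\,d(u_i,u_j)$ is correct, as is the observation that comparing $\hat{Q}$ values of competing pairs must exploit cancellation of the shared sums $\sum_k \hat{d}(\cdot,k)$ --- but the quantitative margin lemma (``I would then prove\dots'') and the perturbation step (``Carrying this out should show\dots'') are exactly where the constant $\frac{1}{2}$ has to emerge, and nothing in the sketch shows that it does; even for exact distances, the cherry-picking property of the $Q$ criterion is a nontrivial theorem (the original Saitou--Nei justification was itself incomplete), and your formula alone does not yield it, since $\Sigma_{u_i}+\Sigma_{u_j}$ for a non-cherry can exceed $2\Sigma_p$ for a cherry. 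More seriously, the induction step is explicitly left open: as you correctly compute, the reduced distances $\hat{d}(k,f)=\frac{1}{2}[\hat{d}(k,i^*)+\hat{d}(k,j^*)-\hat{d}(i^*,j^*)]$ can deviate from the contracted metric by $\frac{3}{2}\delta$, so the invariant $\|\hat{d}-d\|_\infty<\frac{1}{2}\min_e d(e)$ is not reproduced and a naive induction loses the constant at every join. You name two possible repairs (tracking the elimination order against the original metric, or a sharper accounting of cancellations) but pursue neither, and you concede that ``essentially all the technical weight of the theorem'' sits there; what remains open is therefore the theorem itself, not a detail of its proof.

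For calibration: the paper does not prove this theorem either. It is quoted as a known result, with both halves --- the example showing no distance-based algorithm has $l_\infty$-radius exceeding $\frac{1}{2}$, and the fact that NJ attains it --- attributed to Atteson \cite{Atteson}, whose proof of the attainment half is a long and delicate argument occupying much of that paper. The paper's own robustness analysis is instead carried out for the \emph{rooted} algorithms (Lemma \ref{lemma:RNJBSuff} and Proposition \ref{prop:RNJBApprox}), where the score $\hat{\rho}(i,j)=\frac{1}{2}[\hat{d}(s,i)+\hat{d}(s,j)-\hat{d}(i,j)]$ involves only three distances, so the margin-versus-perturbation comparison closes in a few lines and the reduction step preserves the needed inequalities (Claim 2 of that lemma). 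So you are attempting to reconstruct Atteson's theorem from scratch; the steps you leave unresolved are precisely the ones that make that reconstruction a substantial undertaking rather than an exercise.
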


It is not straightforward to extend the NJ algorithm for general (non-binary) trees. Since most
routing trees in communication networks are not binary, we are motivated to design algorithms that
can handle general trees.

\section{Rooted Neighbor-Joining Algorithm} \label{sec:NTRNJ}

\subsection{Binary Trees}

We first present an algorithm which can be viewed as a \emph{rooted} version of the NJ algorithm
for binary trees. To avoid trivial cases, we assume $|D|\geq 2$.
\\\\
\setlength{\unitlength}{1mm}
\begin{picture}(89,0)
\put(0,0){\line(1,0){89}}  \thicklines
\end{picture}
\\
 \emph{Algorithm 2: Rooted Neighbor-Joining (RNJ) Algorithm for Binary Trees}
\\
\setlength{\unitlength}{1mm}
\begin{picture}(89,0)
\put(0,0){\line(1,0){89}}  \thicklines
\end{picture}
\\
\textbf{Input:} Estimated distances between the nodes in $U=s\cup D$, $\hat{d}(U^2)$.
\begin{itemize}
\item[1.] $V=\{s\}$, $E=\emptyset$.\\
          For any pair of nodes $i,j\in D$, compute
          \begin{eqnarray}\label{eq:RNJrhofun}
          \hat{\rho}(i,j) & = & \frac{\hat{d}(s,i)+\hat{d}(s,j)-\hat{d}(i,j)}{2}.
          \end{eqnarray}

\item[2.1] Find $i^*,j^*\in D$ with the largest $\hat{\rho}(i,j)$ (break the tie arbitrarily).
\\Create a node $f$ as the parent of $i^*$ and $j^*$.\\
$D=D\setminus \{i^*,j^*\}$,\\
$V=V\cup \{i^*,j^*\}$, $E=E\cup \{(f, i^*),(f, j^*)\}$.

\item[2.2] Compute:
\begin{eqnarray}
\hat{d}(s,f) & = & \hat{\rho}(i^*,j^*), \\
\hat{d}(f,i^*) & = & \hat{d}(s,i^*)-\hat{\rho}(i^*,j^*), \\
\hat{d}(f,j^*) & = & \hat{d}(s,j^*)-\hat{\rho}(i^*,j^*).
\end{eqnarray}

\item[2.3] For each $k\in D$, compute:
\begin{equation}
\begin{split}
\hat{d}(k,f)  = & \frac{1}{2}\big[\hat{d}(k,i^*)-\hat{d}(f,i^*)\big] +
\frac{1}{2}\big[\hat{d}(k,j^*)-\hat{d}(f,j^*)\big], \\
\hat{\rho}(k,f) = & \frac{1}{2}\big[\hat{d}(s,k)+\hat{d}(s,f)-\hat{d}(k,f)\big] \nonumber \\
                = & \frac{1}{2}\big[\hat{\rho}(k,i^*)+\hat{\rho}(k,j^*)\big].
\end{split}
\end{equation}
$D = D\cup f$.

\item[3.] If $|D|=1$, for the $i \in D$: $V=V\cup \{i\}$, $E=E\cup(s,i)$.
\\Otherwise, repeat Step 2.

\end{itemize}
\textbf{Output:} Tree $\hat{T}=(V,E)$, and link lengths $\hat{d}(e)$ for all $e\in E$.\\

\begin{picture}(89,0)
\put(0,0){\line(1,0){89}}  \thicklines
\end{picture}
\\

The major difference between the NJ algorithm and the RNJ algorithm is the selection of the
\emph{score function}: the NJ algorithm uses the $\hat{Q}$ function defined in (\ref{eq:NJQfun}),
which has no simple interpretation; while the RNJ algorithm uses the $\hat{\rho}$ function in
(\ref{eq:RNJrhofun}), which has a simple interpretation that we will explain next.

For any pair of nodes $i, j\in D$, remember $\underline{ij}$ is their nearest common ancestor on
$T(s,D)$. Under additive metric $d$, we know
\begin{equation} \label{eq:rho}
\rho(i,j) = \frac{d(s,i)+d(s,j)-d(i,j)}{2} = d(s,\underline{ij})
\end{equation}
is the distance from the root (source node $s$) to $\underline{ij}$. It is not hard to verify that
a pair of nodes $i^*, j^*$ with largest $\rho(i,j)$ must be neighbors (siblings) on the tree.
$\hat{\rho}(i,j)$ in (\ref{eq:RNJrhofun}) is the estimated distance from the root to
$\underline{ij}$ computed from the input distances. If the input distances are close to the true
additive distances, then we would expect that the two nodes selected in Step 2.1 of Algorithm 2 are
indeed neighbors.

We provide a sufficient condition for Algorithm 2 to return the correct tree topology. From the
condition we can establish several nice properties of the algorithm.

\begin{lemma} \label{lemma:RNJBSuff}
For binary trees, a sufficient condition for Algorithm 2 to return the correct tree topology is:
\begin{eqnarray}\label{eq:topocon1}
&  & \forall i,j,k \in D \mbox{ \emph{s.t.} } \underline{ij} \prec \underline{ik} \nonumber \\
\Rightarrow &  & \hat{\rho}(i,j) > \hat{\rho}(i,k),
\end{eqnarray}
where $\underline{ij} \prec \underline{ik}$ means that $\underline{ij}$ is descended from
$\underline{ik}$.
\end{lemma}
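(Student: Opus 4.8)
The plan is to argue by induction on the number of destination leaves $|D|$, maintaining the invariant that at the start of each iteration of Step 2 the current leaf set together with the current $\hat{\rho}$-values satisfies hypothesis (\ref{eq:topocon1}) with respect to the current (contracted) routing tree. If I can show (i) that under (\ref{eq:topocon1}) the pair $i^*,j^*$ selected in Step 2.1 is always a genuine cherry (sibling pair) of the current tree, and (ii) that contracting this cherry into a new leaf $f$ preserves (\ref{eq:topocon1}), then each iteration peels off a correct sibling pair and the invariant carries over to the smaller tree; after $|D|-1$ iterations only the child of $s$ remains and Step 3 attaches it to $s$, yielding the correct topology.

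For step (i), I would suppose toward a contradiction that the maximizer $i^*,j^*$ of $\hat{\rho}$ is not a sibling pair, and set $g=\underline{i^*j^*}$. Since $g$ is the nearest common ancestor, it has distinct children toward $i^*$ and toward $j^*$; non-siblinghood forces, without loss of generality, the child $c$ on the path to $i^*$ to satisfy $c\neq i^*$, so $c$ is internal and has a leaf descendant $j'$ in the branch of $c$ not containing $i^*$. Then $\underline{i^*j'}=c$ is a strict descendant of $g=\underline{i^*j^*}$, i.e.\ $\underline{i^*j'}\prec\underline{i^*j^*}$, so (\ref{eq:topocon1}) gives $\hat{\rho}(i^*,j')>\hat{\rho}(i^*,j^*)$, contradicting maximality; one checks $j'\notin\{i^*,j^*\}$ and that $j'$ is a current leaf. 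This is the conceptual crux: the single ordering hypothesis (\ref{eq:topocon1}) is exactly strong enough to guarantee that the deepest estimated common ancestor is realized by a true cherry, and a binary tree with at least two leaves always contains such a cherry, so a maximizer exists.

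For step (ii), I would use the update rule $\hat{\rho}(k,f)=\tfrac12[\hat{\rho}(k,i^*)+\hat{\rho}(k,j^*)]$ from Step 2.3 together with the observation that, because $i^*$ and $j^*$ are siblings with parent $f$, every other leaf $k$ reaches both $i^*$ and $j^*$ through $f$; hence in the original tree $\underline{ki^*}=\underline{kj^*}$, and this node equals $\underline{kf}$ in the contracted tree where $f$ is now a leaf. With this dictionary I would verify (\ref{eq:topocon1}) for the contracted tree by cases according to whether $f\in\{i,j,k\}$. When $f$ is absent the inequality is inherited verbatim from the original condition; in each of the three cases where $f$ appears, the defining descendant relation $\underline{\cdot\,\cdot}\prec\underline{\cdot\,\cdot}$ translates via the dictionary into two instances of (\ref{eq:topocon1}) on the original tree involving $i^*$ and $j^*$ separately, and averaging the two resulting strict inequalities yields the required inequality for $f$.

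I expect the main obstacle to be the bookkeeping in step (ii): correctly re-expressing nearest common ancestors in the contracted tree in terms of the original tree, and checking that each descendant relation involving $f$ really does produce two valid instances of the hypothesis that can be averaged. Step (i) is short once the right comparison leaf $j'$ is produced, and the termination and base case ($|D|=2$ followed by Step 3) are immediate, so the case analysis in the preservation step is where the real work lies.
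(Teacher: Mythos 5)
Your proposal is correct and follows essentially the same route as the paper's proof: induction on $|D|$ with the two key claims that the $\hat{\rho}$-maximizing pair $i^*,j^*$ must be siblings, and that condition (\ref{eq:topocon1}) is preserved after contraction via the dictionary $\underline{if}=\underline{ii^*}=\underline{ij^*}$ and the averaging identity $\hat{\rho}(k,f)=\frac{1}{2}\big[\hat{\rho}(k,i^*)+\hat{\rho}(k,j^*)\big]$. Your step (i) merely makes explicit the witness leaf $j'$ whose existence the paper only asserts, and your case analysis in step (ii) matches (indeed slightly expands, by also treating $f$ in the anchor position) the paper's two cases.
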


\begin{proof}
We prove the lemma by induction on the cardinality of $D$.
\\ (1) If $|D|=2$, then clearly Algorithm 2 will return the
correct tree topology.
\\ (2) Assume Algorithm 2 returns the correct tree topology under condition (\ref{eq:topocon1}) for
$|D|\leq N$. Now consider $|D|=N+1$.

\emph{\textbf{Claim 1. $i^*,j^*$ found in Step 2.1 which maximize $\hat{\rho}(i,j)$  are siblings
(neighbors).}}\\
If $i^*$ and $j^*$ are not siblings, then there exists $k \in D$ such that either
$\underline{i^*k}$ or $\underline{j^*k}$ is descended from $\underline{i^*j^*}$. Under condition
(\ref{eq:topocon1}), this implies either
\begin{eqnarray*}
& & \hat{\rho}(i^*, k) > \hat{\rho}(i^*, j^*) \\
\mbox{ or } & & \hat{\rho}(j^*, k) >  \hat{\rho}(i^*, j^*),
\end{eqnarray*}
a contradiction to the maximality of $\hat{\rho}(i^*, j^*)$.

\emph{\textbf{Claim 2. Condition (\ref{eq:topocon1}) is maintained over the nodes in $D$ after Step
2.}}
\\
After Step 2, $i^*,j^*$ are deleted from $D$ and $f$ is added to $D$ as a new leaf node. Since
$i^*,j^*$ are siblings and $f$ is their parent, we know that for any $i \in D$,
\begin{equation*}
\underline{if}=\underline{ii^*}=\underline{ij^*}.
\end{equation*}

Therefore, $\forall i,j \in D$ s.t. $\underline{ij} \prec \underline{if}$, we have $\underline{ij}
\prec \underline{ii^*}$ and $\underline{ij} \prec \underline{ij^*}$, which implies
\begin{eqnarray*}
& & \hat{\rho}(i,j) > \hat{\rho}(i,i^*)\\
\mbox{ and }   & & \hat{\rho}(i,j) > \hat{\rho}(i,j^*)\\
\mbox{ hence } & & \hat{\rho}(i,j) > \hat{\rho}(i,f) =
\frac{1}{2}\big[\hat{\rho}(i,i^*)+\hat{\rho}(i,j^*)\big].
\end{eqnarray*}

Similarly, $\forall i,k \in D$ s.t. $\underline{if} \prec \underline{ik}$, we can show
$\hat{\rho}(i,f) > \hat{\rho}(i,k)$.

From claims 1 and 2, we know that after one iteration of Step 2, Algorithm 2 will correctly find
out a pair of siblings, and condition (\ref{eq:topocon1}) is maintained for the new set of leaf
nodes in $D$. Then $|D|$ is decreased by 1. By induction assumption, the algorithm will return the
correct topology of the remaining part of the tree. This completes our proof of the lemma. \qed
\end{proof}

\begin{proposition} \label{prop:RNJBExact}
For binary trees, Algorithm 2 will return the correct tree topology and link lengths if the input
distances $\hat{d}(U^2)$ are additive.
\end{proposition}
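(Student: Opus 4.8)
The plan is to prove the two assertions—correct topology and correct link lengths—separately, treating the topology by a direct appeal to Lemma~\ref{lemma:RNJBSuff} and the link lengths by an induction on $|D|$ that parallels the induction already carried out in that lemma.

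For the topology I would only need to verify that additive input distances satisfy the sufficient condition (\ref{eq:topocon1}). When $\hat{d}(U^2)$ is additive we have $\hat{d}=d$, so by (\ref{eq:rho}) we get $\hat{\rho}(i,j)=\rho(i,j)=d(s,\underline{ij})$ for every pair $i,j\in D$. The hypothesis $\underline{ij}\prec\underline{ik}$ states that $\underline{ij}$ is a strict descendant of $\underline{ik}$ in the tree rooted at $s$; since every link length satisfies $0<d(e)<\infty$, the root-to-node distance strictly increases along any root-to-leaf path, whence $d(s,\underline{ij})>d(s,\underline{ik})$, i.e. $\hat{\rho}(i,j)>\hat{\rho}(i,k)$. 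This establishes (\ref{eq:topocon1}), so Lemma~\ref{lemma:RNJBSuff} delivers the correct topology.

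For the link lengths I would induct on $|D|$. The case $|D|=2$ is immediate. For the inductive step, topology correctness (Claim~1 in the proof of Lemma~\ref{lemma:RNJBSuff}) guarantees that the pair $i^*,j^*$ selected in Step~2.1 are genuine siblings with common parent $f=\underline{i^*j^*}$. I would then check Step~2.2 returns the exact lengths: $\hat{d}(s,f)=\hat{\rho}(i^*,j^*)=d(s,\underline{i^*j^*})=d(s,f)$, and since $f$ lies on $\mathcal{P}(s,i^*)$, path-additivity gives $\hat{d}(f,i^*)=d(s,i^*)-d(s,f)=d(f,i^*)$, and likewise $\hat{d}(f,j^*)=d(f,j^*)$. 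The crux is to show that the \emph{reduced} data produced in Step~2.3 is again the additive metric of the smaller tree $T'$ obtained by deleting $i^*,j^*$ and promoting $f$ to a leaf. Because $i^*,j^*$ are true siblings, $f$ lies on every path from a surviving node $k$ to $i^*$ or $j^*$, so $d(k,i^*)=d(k,f)+d(f,i^*)$ and $d(k,j^*)=d(k,f)+d(f,j^*)$; substituting into the Step~2.3 formula yields $\hat{d}(k,f)=d(k,f)$, while $\underline{ki^*}=\underline{kj^*}=\underline{kf}$ gives $\hat{\rho}(k,f)=d(s,\underline{kf})=\rho(k,f)$. The distances and $\rho$-values among the surviving nodes are untouched and hence still equal their true additive values on $T'$. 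Thus the updated input is exactly the additive metric on $T'$, the induction hypothesis recovers its topology and link lengths, and together with the already-exact links $(f,i^*),(f,j^*)$ this reconstructs all of $T$.

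I expect the reduction step to be the main obstacle—namely confirming that collapsing the sibling pair into their parent leaves behind a bona fide additive instance on a smaller tree. Everything hinges on $f$ lying on the paths from each surviving node to $i^*$ and to $j^*$, which is precisely what sibling-hood (already guaranteed by topology correctness) supplies; once this additivity-preservation is verified, the induction closes routinely.
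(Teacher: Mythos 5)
Your proof is correct, and its core coincides with the paper's: show that additive input distances force condition (\ref{eq:topocon1}) (since then $\hat{\rho}(i,j)=\rho(i,j)=d(s,\underline{ij})$ and strictly positive link lengths make root-to-node distance strictly increase along descents), then invoke Lemma~\ref{lemma:RNJBSuff} for the topology. Where you differ is the link-length claim: the paper disposes of it in one sentence (``under additive distances it is clear that the link lengths computed in Step 2.2 of Algorithm 2 are correct''), whereas you prove it by induction on $|D|$, checking that Step 2.2 is exact when $i^*,j^*$ are genuine siblings, and---the key point---that the update in Step 2.3 hands the next iteration a bona fide additive instance on the collapsed tree: $\hat{d}(k,f)=d(k,f)$ because $f$ separates each surviving $k$ from $i^*$ and $j^*$, and $\hat{\rho}(k,f)=\rho(k,f)$ because $\underline{ki^*}=\underline{kj^*}=\underline{kf}$. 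This additivity-preservation step is precisely what is needed for the paper's ``it is clear'' to be valid at \emph{every} iteration rather than only the first, since the lemma's induction covers topology but says nothing about lengths; so your version is the more complete argument. The extra work buys rigor rather than generality, and nothing in it is wrong.
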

\begin{proof}
If the input distances are additive, then $\hat{\rho}(i,j)$ and $\hat{\rho}(i,k)$ are the actual
distances from $s$ to $\underline{ij}$ and $\underline{ik}$ under an additive metric. In this case,
if $\underline{ij}$ is descended from $\underline{ik}$, since link lengths are positive, we have
$\hat{\rho}(i,j) > \hat{\rho}(i,k)$, hence condition (\ref{eq:topocon1}) holds. Then by Lemma
\ref{lemma:RNJBSuff}, Algorithm 2 will return the correct tree topology. In addition, under
additive distances it is clear that the link lengths computed in Step 2.2 of Algorithm 2 are
correct. \qed
\end{proof}

In practice, the distances between the terminal nodes are estimated from measurements taken at the
terminal nodes. The estimated distances may deviate from the true additive distances due to
measurement errors. Nevertheless, we will show that if the estimated distances are close enough to
the true distances, then Algorithm 2 will return the correct tree topology. In addition, Algorithm
2 achieves the optimal $l_\infty$-radius among all distance-based algorithms.

\begin{proposition} \label{prop:RNJBApprox}
The RNJ algorithm (Algorithm 2) achieves the optimal $l_\infty$-radius $\frac{1}{2}$ for binary
trees, \ie, for any binary tree associated with any additive metric $d$, whenever the input
distances $\hat{d}(U^2)$ satisfy:
\begin{eqnarray} \label{eq:a2radius}
\max_{i,j\in U}|\hat{d}(i,j)-d(i,j)| & < & \frac{1}{2}\min_{e\in E}d(e),
\end{eqnarray}
Algorithm 2 will return the correct tree topology.
\end{proposition}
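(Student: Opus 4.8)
The plan is to reduce Proposition~\ref{prop:RNJBApprox} to the sufficient condition~(\ref{eq:topocon1}) of Lemma~\ref{lemma:RNJBSuff}. That is, I would assume the $l_\infty$ error bound~(\ref{eq:a2radius}) and show it implies that for every triple $i,j,k\in D$ with $\underline{ij}\prec\underline{ik}$ we have $\hat{\rho}(i,j)>\hat{\rho}(i,k)$; Lemma~\ref{lemma:RNJBSuff} then delivers the correct topology immediately. Note that since $\hat{\rho}$ is built from $\hat{d}$ only through linear combinations (see~(\ref{eq:RNJrhofun})), the estimation error in $\hat{\rho}$ is controlled directly by the error in $\hat{d}$, which is exactly what the hypothesis bounds.

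First I would set $\delta\stackrel{\Delta}{=}\max_{i,j\in U}|\hat{d}(i,j)-d(i,j)|$, so the hypothesis reads $\delta<\tfrac{1}{2}\min_{e\in E}d(e)$. From the definition $\hat{\rho}(i,j)=\tfrac{1}{2}\big[\hat{d}(s,i)+\hat{d}(s,j)-\hat{d}(i,j)\big]$ and the corresponding exact identity $\rho(i,j)=d(s,\underline{ij})$ in~(\ref{eq:rho}), each $\hat{\rho}$ value is a half-sum of three perturbed distances, so I expect $|\hat{\rho}(i,j)-\rho(i,j)|\leq\tfrac{3}{2}\delta$ by the triangle inequality. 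That crude bound is too weak, however: to beat the optimal radius $\tfrac12$ I must exploit cancellation. The key observation is that $\hat{\rho}(i,j)-\hat{\rho}(i,k)$ shares the term $\hat{d}(s,i)$, which cancels, so the difference depends only on $\hat{d}(s,j),\hat{d}(s,k),\hat{d}(i,j),\hat{d}(i,k)$; this should let me control the perturbation of the \emph{difference} $\hat{\rho}(i,j)-\hat{\rho}(i,k)$ by $2\delta$ rather than $3\delta$.

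Next I would lower-bound the true gap $\rho(i,j)-\rho(i,k)=d(s,\underline{ij})-d(s,\underline{ik})=d(\underline{ik},\underline{ij})$, the length of the tree path from $\underline{ik}$ down to $\underline{ij}$. Since $\underline{ij}\prec\underline{ik}$ are distinct nodes, this path contains at least one link, so $d(\underline{ik},\underline{ij})\geq\min_{e\in E}d(e)>2\delta$ by the hypothesis. Combining the two estimates, I would write
\begin{equation*}
\hat{\rho}(i,j)-\hat{\rho}(i,k)\geq\big[\rho(i,j)-\rho(i,k)\big]-2\delta>2\delta-2\delta=0,
\end{equation*}
which is precisely condition~(\ref{eq:topocon1}). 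The main obstacle, and the step I would be most careful about, is establishing the sharp $2\delta$ bound on the error of the $\hat\rho$-difference: one must track exactly which four $\hat{d}$ terms survive after the $\hat{d}(s,i)$ cancellation and confirm each contributes at most $\tfrac12\delta$ in absolute value, so that the naive $3\delta$ is genuinely improved to $2\delta$. Getting this constant right is what makes the radius come out to the optimal $\tfrac12$ rather than something smaller; the optimality (that no algorithm exceeds $\tfrac12$) is already supplied by Atteson~\cite{Atteson} as noted in the excerpt, so I only need to prove the achievability direction carried out above.
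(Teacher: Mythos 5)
Your proposal is correct and follows essentially the same route as the paper's proof: reduce to Lemma~\ref{lemma:RNJBSuff}, exploit the cancellation of the shared $\hat{d}(s,i)$ term so that the perturbation of $\hat{\rho}(i,j)-\hat{\rho}(i,k)$ is controlled by four error terms each at most $\tfrac{1}{2}\delta$ (total $2\delta < \Delta$), and compare against the true gap $\rho(i,j)-\rho(i,k)\geq\Delta$. The paper's proof carries out exactly this computation, and like you it cites Atteson for the matching upper bound of $\tfrac{1}{2}$.
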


\begin{proof}
Using Lemma \ref{lemma:RNJBSuff} we only need to show that condition (\ref{eq:a2radius}) implies
condition (\ref{eq:topocon1}). Let
\begin{eqnarray*}
\Delta & = & \min_{e\in E}d(e)
\end{eqnarray*}
be the minimum link length on the tree. If $\underline{ij} \prec \underline{ik}$, \ie, if
$\underline{ij}$ is descended from $\underline{ik}$, since link lengths $\geq$ $\Delta$, we have
\begin{eqnarray*}
\rho(i,j)-\rho(i,k) & \geq & \Delta.
\end{eqnarray*}
Then from (\ref{eq:RNJrhofun}), (\ref{eq:rho}), (\ref{eq:a2radius}) we have:
\begin{eqnarray*}
 &      & \hat{\rho}(i,j)-\hat{\rho}(i,k) \\
 & \geq & \Big(\hat{\rho}(i,j)-\hat{\rho}(i,k)\Big) - \Big(\rho(i,j)-\rho(i,k)-\Delta \Big) \\
 & \geq & \Delta - \frac{1}{2}|(\hat{d}(s,j)-d(s,j)| - \frac{1}{2}|\hat{d}(i,j)-d(i,j)| \\
 &      & -\frac{1}{2}|\hat{d}(s,k)-d(s,k)| - \frac{1}{2}|\hat{d}(i,k)-d(i,k)| \\
 &  >   & \Delta - \frac{1}{4}\Delta - \frac{1}{4}\Delta - \frac{1}{4}\Delta - \frac{1}{4}\Delta \\
 &  >   &  0.
\end{eqnarray*}

Hence condition (\ref{eq:a2radius}) indeed implies condition (\ref{eq:topocon1}). Since
(\ref{eq:topocon1}) is a sufficient condition for Algorithm 2 to return the correct tree topology,
(\ref{eq:a2radius}) is also a sufficient condition for Algorithm 2 to return the correct tree
topology. \qed
\end{proof}

\subsection{General Trees}

$\rho(i,j)$ is the distance from the root to the nearest common ancestor of nodes $i$ and $j$. For
a general routing tree with positive link lengths, we have several observations of the $\rho$
function.
\begin{itemize}
\item If nodes $i$ and $j$ are neighbors on the tree, then for any other node $k$ on the tree we
have
\begin{eqnarray}
\rho(i,j) & \geq & \rho(i,k).
\end{eqnarray}

\item If nodes $i$ and $j$ are neighbors on the tree, then for any other node $k$ that is also a
neighbor of $i$ and $j$ we have
\begin{eqnarray} \label{eq:obsneighbor}
\rho(i,j) & = & \rho(i,k)
\end{eqnarray}
because $\underline{ij}=\underline{ik}$.

\item If nodes $i$ and $j$ are neighbors on the tree, then for any other node $k$ that is not a
neighbor of $i$ and $j$ we have
\begin{eqnarray} \label{eq:obsnonneighbor}
\rho(i,j) & \geq & \rho(i,k)+\Delta
\end{eqnarray}
(where $\Delta$ is the minimum link length) because $\underline{ij}$ is descended from
$\underline{ik}$ and they are separated by at least one link.
\end{itemize}

Therefore, we can determine whether a group of nodes are neighbors on the tree from knowledge of
the $\rho$ function under an additive metric.

To extend the RNJ algorithm (Algorithm 2) for general trees, after we find out two nodes $i^*$ and
$j^*$ that are likely to be neighbors in Step 2.1, we need to find out other nodes that are likely
to be neighbors of $i^*$ and $j^*$ based on $\hat{\rho}$ computed from the input distances. We use
the following \emph{threshold neighbor criterion}:
\\\\
\textbf{Threshold Neighbor Criterion.}\\ Suppose $i^*$ and $j^*$ are neighbors on the tree. Node
$k$ will be chosen as a neighbor of $i^*$ and $j^*$ if and only if
\begin{eqnarray} \label{eq:thresholdcriterion}
\hat{\rho}(i^*,j^*) - \hat{\rho}(i^*,k) & \leq & t
\end{eqnarray}
for some threshold $t>0$.\\

Based on observations (\ref{eq:obsneighbor}) and (\ref{eq:obsnonneighbor}), and since $\hat{\rho}$
is an estimator of $\rho$ with possible estimation errors, we use the middle point
$\frac{\Delta}{2}$ as the threshold. Later we will show that such a threshold enables the algorithm
to achieve the optimal $l_\infty$-radius $\frac{1}{4}$ for general trees if the threshold criterion
(\ref{eq:thresholdcriterion}) is used in the algorithm (see the proof of Proposition
\ref{prop:upperboundradius}).
\\\\
\setlength{\unitlength}{1mm}
\begin{picture}(89,0)
\put(0,0){\line(1,0){89}}  \thicklines
\end{picture}
\\
 \emph{Algorithm 3: Rooted Neighbor-Joining (RNJ) Algorithm for General Trees}
\\
\setlength{\unitlength}{1mm}
\begin{picture}(89,0)
\put(0,0){\line(1,0){89}}  \thicklines
\end{picture}\\
\textbf{Input:} Estimated distances between the nodes in $U=s\cup D$, $\hat{d}(U^2)$; estimated
minimum link length $\Delta>0$.
\begin{itemize}
\item[1.] $V=\{s\}$, $E=\emptyset$.\\
          For any pair of nodes $i,j\in D$, compute
          \begin{eqnarray}\label{eq:RNJGrhofun}
          \hat{\rho}(i,j) & = & \frac{\hat{d}(s,i)+\hat{d}(s,j)-\hat{d}(i,j)}{2}.
          \end{eqnarray}

\item[2.1] Find $i^*,j^*\in D$ with the largest $\hat{\rho}(i,j)$ (break the tie arbitrarily).
\\Create a node $f$ as the parent of $i^*$ and $j^*$.\\
$D=D\setminus \{i^*,j^*\}$, \\
$V=V\cup \{i^*,j^*\}$, $E=E\cup \{(f, i^*),(f, j^*)\}$.

\item[2.2] Compute:
\begin{eqnarray}
\hat{d}(s,f)   & = & \hat{\rho}(i^*,j^*), \\
\hat{d}(f,i^*) & = & \hat{d}(s,i^*)-\hat{\rho}(i^*,j^*), \\
\hat{d}(f,j^*) & = & \hat{d}(s,j^*)-\hat{\rho}(i^*,j^*).
\end{eqnarray}

\item[2.3] For every $k\in D$ such that $\hat{\rho}(i^*,j^*)-\hat{\rho}(i^*,k) \leq
\frac{\Delta}{2}$:
\\
$D=D\setminus k$, \\
$V=V\cup k$, $E=E\cup (f, k)$.\\
Compute:
\begin{eqnarray}
\hat{d}(f,k) & = & \hat{d}(s,k)- \hat{\rho}(i^*,j^*).
\end{eqnarray}

\item[2.4] For each $k\in D$, compute:
\begin{equation}
\begin{split}
\hat{d}(k,f) = & \frac{1}{2}\big[\hat{d}(k,i^*)-\hat{d}(f,i^*)\big] +
\frac{1}{2}\big[\hat{d}(k,j^*)-\hat{d}(f,j^*)\big], \\
\hat{\rho}(k,f) = & \frac{1}{2}\big[\hat{d}(s,k)+\hat{d}(s,f)-\hat{d}(k,f)\big] \nonumber \\
                = & \frac{1}{2}\big[\hat{\rho}(k,i^*)+\hat{\rho}(k,j^*)\big].
\end{split}
\end{equation}
$D = D\cup f$.

\item[3.] If $|D|=1$, for the $i \in D$: $V=V\cup \{i\}$, $E=E\cup(s,i)$.
\\Otherwise, repeat Step 2.

\end{itemize}
\emph{Output:} Tree $\hat{T}=(V,E)$, and link lengths $\hat{d}(e)$ for all $e\in E$.\\
\begin{picture}(89,0)
\put(0,0){\line(1,0){89}}  \thicklines
\end{picture}
\\

\begin{lemma}  \label{lemma:RNJGSuff}
Let $\Delta \leq \min_{e\in E}d(e)$ be the input parameter. A sufficient condition for Algorithm 3
to return the correct tree topology is:
\begin{eqnarray} \label{eq:topocon2}
\forall i,j,k \in D \mbox{ s.t. } \underline{ij} \prec \underline{ik} & \Rightarrow &
\hat{\rho}(i,j) - \hat{\rho}(i,k) > \frac{\Delta}{2}, \nonumber \\
\forall i,j,k \in D \mbox{ s.t. } \underline{ij} = \underline{ik} & \Rightarrow & |\hat{\rho}(i,j)
- \hat{\rho}(i,k)| \leq \frac{\Delta}{2}. \nonumber \\
\end{eqnarray}
\end{lemma}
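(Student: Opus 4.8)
The plan is to mirror the inductive structure of the proof of Lemma \ref{lemma:RNJBSuff}, but with the threshold step (\ref{eq:thresholdcriterion}) playing the role that pairwise selection played in the binary case. I would argue by induction on $|D|$, with the base case $|D|=2$ immediate. For the inductive step I would establish three claims: (i) the pair $i^*, j^*$ maximizing $\hat{\rho}$ in Step 2.1 are siblings whose common parent $p:=\underline{i^*j^*}$ has \emph{only} terminal children in the current leaf set; (ii) the threshold test in Step 2.3 selects exactly the remaining children of $p$; and (iii) condition (\ref{eq:topocon2}) is preserved after the whole sibling group is collapsed into the new leaf $f$.

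For claim (i) I would first rule out that $i^*$ (or $j^*$) sits below a branching node strictly between itself and $p$: such a node would subtend another leaf $k$ with $\underline{i^*k} \prec \underline{i^*j^*}$, and part one of (\ref{eq:topocon2}) would force $\hat{\rho}(i^*,k) > \hat{\rho}(i^*,j^*)$, contradicting maximality. The more delicate point is to show $p$ has no internal child at all. If it did, that child would subtend two leaves $k_1, k_2$ with $\underline{k_1k_2} \prec p$; using $\underline{k_1k_2} \prec \underline{k_1i^*} = p$ in part one gives $\hat{\rho}(k_1,k_2) > \hat{\rho}(k_1,i^*) + \frac{\Delta}{2}$, while $\underline{i^*k_1} = \underline{i^*j^*}$ in part two gives $\hat{\rho}(i^*,k_1) \geq \hat{\rho}(i^*,j^*) - \frac{\Delta}{2}$; chaining the two yields $\hat{\rho}(k_1,k_2) > \hat{\rho}(i^*,j^*)$, again contradicting maximality. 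Hence every leaf-descendant of $p$ is a direct child of $p$, i.e. a genuine sibling of $i^*, j^*$.

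Claim (ii) then follows cleanly from the dichotomy in (\ref{eq:topocon2}): for a true sibling $k$ we have $\underline{i^*k} = \underline{i^*j^*}$, so part two gives $\hat{\rho}(i^*,j^*) - \hat{\rho}(i^*,k) \leq \frac{\Delta}{2}$ and $k$ passes the threshold; for any non-sibling $k$, claim (i) guarantees $\underline{i^*j^*} \prec \underline{i^*k}$, so part one gives $\hat{\rho}(i^*,j^*) - \hat{\rho}(i^*,k) > \frac{\Delta}{2}$ and $k$ is rejected. Thus Step 2.3 recovers precisely the children of $p$, and $f$ correctly represents $p$ in the reduced tree.

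For claim (iii) I would use that, for any surviving leaf $i$, the nearest common ancestor $\underline{if}$ equals $\underline{ii^*} = \underline{ij^*}$ and the update $\hat{\rho}(i,f) = \frac{1}{2}[\hat{\rho}(i,i^*) + \hat{\rho}(i,j^*)]$ from Step 2.4 holds. Checking (\ref{eq:topocon2}) for triples involving $f$ then reduces to averaging the corresponding inequalities for $i^*$ and $j^*$: the strict bound $> \frac{\Delta}{2}$ in part one survives averaging, and the bound $\leq \frac{\Delta}{2}$ in part two survives the triangle inequality $|\frac{1}{2}(a+b) - c| \leq \frac{1}{2}|a-c| + \frac{1}{2}|b-c|$; the case where $f$ is the shared index is handled symmetrically via $\underline{fj} = \underline{i^*j}$. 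With all three claims in hand, collapsing the sibling group decreases $|D|$ while preserving (\ref{eq:topocon2}), and the induction hypothesis finishes the argument. I expect the main obstacle to be claim (i) --- specifically the chaining argument that forces $p$'s children to be terminal --- since this is exactly what makes the fixed threshold $\frac{\Delta}{2}$ separate siblings from non-siblings, and it is the one place where both halves of (\ref{eq:topocon2}) must be invoked together.
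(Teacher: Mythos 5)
Your proposal is correct and takes essentially the same route as the paper: the paper's proof is exactly an outline of your three claims (the maximizing pair are siblings, the threshold test selects precisely the siblings, and condition (\ref{eq:topocon2}) is preserved after collapsing the group) followed by induction on $|D|$. Your chaining argument in claim (i) --- using both halves of (\ref{eq:topocon2}) to show the parent of $i^*,j^*$ has no internal child, so that every node passing the threshold is a genuine sibling --- correctly supplies the one nontrivial detail that the paper's outline leaves implicit.
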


\begin{proof}
We outline the proof, which is similar to the proof of Lemma \ref{lemma:RNJBSuff}. There are three
key observations:
\begin{itemize}
\item[(1)] Under condition (\ref{eq:topocon2}), $i^*,j^*$ found in Step 2.1 of Algorithm 3 are
siblings.

\item[(2)] Under condition (\ref{eq:topocon2}), $k$ will be selected in Step 2.3 if and only if it
is a sibling of $i^*$ and $j^*$.

\item[(3)] Condition (\ref{eq:topocon2}) is maintained over the nodes in $D$ after Step 2.
\end{itemize}

The lemma then follows by induction on the cardinality of $D$. \qed
\end{proof}

\begin{proposition}
For general trees, Algorithm 3 will return the correct tree topology and link lengths if the input
distances $\hat{d}(U^2)$ are additive.
\end{proposition}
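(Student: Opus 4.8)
The plan is to derive this as a corollary of Lemma \ref{lemma:RNJGSuff}, in direct analogy with the way Proposition \ref{prop:RNJBExact} follows from Lemma \ref{lemma:RNJBSuff} in the binary case. Since the input distances are additive, the quantity $\hat{\rho}(i,j)$ computed in (\ref{eq:RNJGrhofun}) equals the true value $\rho(i,j)=d(s,\underline{ij})$ from (\ref{eq:rho}). Choosing the input parameter to be the genuine minimum link length, $\Delta=\min_{e\in E}d(e)$, the hypothesis $\Delta\le\min_{e}d(e)$ of Lemma \ref{lemma:RNJGSuff} holds, so it suffices to verify that $\hat{\rho}=\rho$ satisfies both lines of condition (\ref{eq:topocon2}).

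First I would dispatch the two implications. If $\underline{ij}\prec\underline{ik}$, then $\underline{ij}$ is a strict descendant of $\underline{ik}$, so the path from $\underline{ik}$ down to $\underline{ij}$ contains at least one link; as every link has length at least $\Delta$, we get $\rho(i,j)-\rho(i,k)=d(s,\underline{ij})-d(s,\underline{ik})\ge\Delta>\frac{\Delta}{2}$, which is the first line of (\ref{eq:topocon2}). If $\underline{ij}=\underline{ik}$, then $\rho(i,j)=\rho(i,k)$, hence $|\hat{\rho}(i,j)-\hat{\rho}(i,k)|=0\le\frac{\Delta}{2}$, the second line. Thus (\ref{eq:topocon2}) holds and Lemma \ref{lemma:RNJGSuff} delivers the correct topology.

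It then remains to check that the output link lengths are exact. The node $f$ built in Step 2.1 is precisely $\underline{i^*j^*}$, so $\hat{d}(s,f)=\hat{\rho}(i^*,j^*)=d(s,f)$, and therefore $\hat{d}(f,i^*)=\hat{d}(s,i^*)-\hat{\rho}(i^*,j^*)=d(s,i^*)-d(s,f)=d(f,i^*)$, with the same identity for $j^*$. Any further sibling $k$ admitted in Step 2.3 also has $f=\underline{i^*k}$ as its parent, so $\hat{d}(f,k)=\hat{d}(s,k)-\hat{\rho}(i^*,j^*)=d(s,k)-d(s,f)=d(f,k)$. To run this inductively I would observe that the Step 2.4 update reproduces the true additive distance, $\hat{d}(k,f)=d(k,f)$ (since $f$ lies on every path from $k$ to the collapsed leaves $i^*,j^*$), and consequently $\hat{\rho}(k,f)=d(s,\underline{kf})$; hence the reduced distance matrix remains additive and the whole argument applies verbatim at the next iteration.

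There is no real obstacle here: the statement is essentially a restatement of Lemma \ref{lemma:RNJGSuff} supplemented by elementary length identities. The single point deserving explicit attention is the contrast between the strict separation $\rho(i,j)-\rho(i,k)\ge\Delta>\frac{\Delta}{2}$ in the descendant case and the exact equality $\rho(i,j)=\rho(i,k)$ for genuine siblings, which is exactly what makes the threshold $\frac{\Delta}{2}$ correctly distinguish true siblings (gap $0$) from non-siblings (gap $\ge\Delta$). Verifying that additivity of the distances is preserved across the agglomerative iterations, so that the induction underlying Lemma \ref{lemma:RNJGSuff} can be invoked at each step, is the only bookkeeping I would write out in full.
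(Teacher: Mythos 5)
Your proof is correct and takes essentially the same approach as the paper, whose entire proof is the remark that it is ``similar to the proof of Proposition \ref{prop:RNJBExact}'': additivity makes $\hat{\rho}=\rho$, so condition (\ref{eq:topocon2}) holds (gap at least $\Delta > \frac{\Delta}{2}$ when $\underline{ij}\prec\underline{ik}$, exact equality when $\underline{ij}=\underline{ik}$), Lemma \ref{lemma:RNJGSuff} then gives the correct topology, and the link-length formulas in Steps 2.2/2.3 are exact under additive inputs. Your explicit bookkeeping---that Step 2.4 reproduces true additive distances so the reduced matrix stays additive across iterations---just fills in details the paper leaves implicit.
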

\begin{proof}
The proof is similar to the proof of Proposition \ref{prop:RNJBExact}.
\end{proof}

In practice the input distances may deviate from the true additive distances due to measurement
errors. Again we can show that if the input distances are close enough to the true additive
distances, then Algorithm 3 will return the correct tree topology.

\begin{proposition} \label{prop:RNJGApprox}
For a general tree with additive metric $d$, if the input parameter
\begin{eqnarray*}
\Delta & \leq & \min_{e\in E}d(e)
\end{eqnarray*}
and the input distances $\hat{d}(U^2)$ satisfy:
\begin{eqnarray} \label{eq:topocon3}
\max_{i,j\in U}|\hat{d}(i,j)-d(i,j)| & < & \frac{\Delta}{4},
\end{eqnarray}
then Algorithm 3 will return the correct tree topology.
\end{proposition}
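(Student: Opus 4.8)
The plan is to invoke Lemma \ref{lemma:RNJGSuff} and reduce the proposition to showing that the metric-error hypothesis (\ref{eq:topocon3}) implies the two-part topological condition (\ref{eq:topocon2}). This mirrors the strategy of Proposition \ref{prop:RNJBApprox} for binary trees, but now I must control \emph{both} the strict separation required when $\underline{ij}\prec\underline{ik}$ and the smallness required when $\underline{ij}=\underline{ik}$, the latter being exactly what the threshold criterion in Step 2.3 tests against $\frac{\Delta}{2}$.

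The key algebraic step is to form the difference $\hat\rho(i,j)-\hat\rho(i,k)$ directly from definition (\ref{eq:RNJGrhofun}) and notice that the common term $\hat{d}(s,i)$ cancels, leaving
\[
\hat\rho(i,j)-\hat\rho(i,k)=\tfrac12\big(\hat{d}(s,j)-\hat{d}(i,j)-\hat{d}(s,k)+\hat{d}(i,k)\big),
\]
and likewise for $\rho(i,j)-\rho(i,k)$ under the true additive metric. Subtracting the two and applying the triangle inequality, the deviation $|(\hat\rho(i,j)-\hat\rho(i,k))-(\rho(i,j)-\rho(i,k))|$ is bounded by $\tfrac12$ times a sum of exactly four individual distance errors of the form $|\hat{d}(\cdot,\cdot)-d(\cdot,\cdot)|$. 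Under (\ref{eq:topocon3}) each such error is strictly less than $\frac{\Delta}{4}$, so the total deviation is strictly less than $\tfrac12\cdot 4\cdot\frac{\Delta}{4}=\frac{\Delta}{2}$. I would emphasize that the cancellation of the $\hat{d}(s,i)$ term is precisely what keeps only four error terms in play; bounding each $\hat\rho$ value separately would be too lossy to reach this threshold.

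With this uniform deviation bound in hand, the two cases of (\ref{eq:topocon2}) follow immediately using (\ref{eq:rho}), namely $\rho(i,j)=d(s,\underline{ij})$. When $\underline{ij}\prec\underline{ik}$, the ancestor $\underline{ij}$ lies strictly below $\underline{ik}$, so the path between them contains at least one link, whence $\rho(i,j)-\rho(i,k)\geq\Delta$; combined with the deviation bound this gives $\hat\rho(i,j)-\hat\rho(i,k)>\Delta-\frac{\Delta}{2}=\frac{\Delta}{2}$, which is the first part of (\ref{eq:topocon2}). When $\underline{ij}=\underline{ik}$ we have $\rho(i,j)-\rho(i,k)=0$, so the deviation bound alone yields $|\hat\rho(i,j)-\hat\rho(i,k)|<\frac{\Delta}{2}$, giving the second part.

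I do not anticipate a genuine obstacle: once the cancellation is noticed, the estimate is essentially the same four-term bound as in Proposition \ref{prop:RNJBApprox}, merely run with tolerance $\frac{\Delta}{4}$ in place of $\frac{\Delta}{2}$. The one point meriting care is the equality case, which has no analogue in the binary proof and is exactly what justifies the midpoint choice $\frac{\Delta}{2}$ as the threshold in the algorithm; I would verify that the strict inequality obtained there is consistent with the non-strict $\leq\frac{\Delta}{2}$ demanded by (\ref{eq:topocon2}), and that $\Delta\leq\min_{e\in E}d(e)$ is used only through the per-link bound $\rho(i,j)-\rho(i,k)\geq\Delta$.
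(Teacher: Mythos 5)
Your proposal is correct and follows exactly the route the paper takes: reduce to Lemma \ref{lemma:RNJGSuff} by showing that (\ref{eq:topocon3}) implies (\ref{eq:topocon2}), via the same four-term cancellation bound used in Proposition \ref{prop:RNJBApprox} (the paper leaves these details as "similar to" that proof, and your filled-in estimate, including the strict-versus-non-strict handling of the equality case, is precisely the intended argument).
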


\begin{proof}
The proof is similar to the proof of Proposition \ref{prop:RNJBApprox}. We can show that condition
(\ref{eq:topocon3}) implies condition (\ref{eq:topocon2}), then the proposition follows by Lemma
\ref{lemma:RNJGSuff}. \qed
\end{proof}

If the input parameter $\Delta = \min_{e\in E}d(e)$, then Proposition \ref{prop:RNJGApprox} says
that the RNJ algorithm has $l_\infty$-radius $\frac{1}{4}$ for general trees.

\begin{corollary}
The RNJ algorithm (Algorithm 3) has $l_\infty$-radius $\frac{1}{4}$ for general trees when $\Delta
= \min_{e\in E}d(e)$.
\end{corollary}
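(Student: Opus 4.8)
The plan is to obtain the corollary directly from Proposition \ref{prop:RNJGApprox} by a single substitution, since the corollary is just that proposition specialized and then rephrased in the language of the $l_\infty$-radius.

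First I would recall the definition of $l_\infty$-radius: an algorithm has $l_\infty$-radius $r$ if, for every tree $T$ equipped with any additive metric $d$, the bound $\max_{i,j\in U}|\hat{d}(i,j)-d(i,j)| < r\min_{e\in E}d(e)$ forces the algorithm to return the correct topology of $T$. Thus to prove radius $\frac{1}{4}$ it suffices to exhibit the hypothesis of Proposition \ref{prop:RNJGApprox} as exactly this condition with $r=\frac{1}{4}$. I would then set the input parameter to $\Delta=\min_{e\in E}d(e)$. This choice trivially meets the requirement $\Delta\leq\min_{e\in E}d(e)$ of Proposition \ref{prop:RNJGApprox}, and it converts that proposition's error bound $\max_{i,j\in U}|\hat{d}(i,j)-d(i,j)| < \frac{\Delta}{4}$ into $\max_{i,j\in U}|\hat{d}(i,j)-d(i,j)| < \frac{1}{4}\min_{e\in E}d(e)$. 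By Proposition \ref{prop:RNJGApprox}, whenever the input distances satisfy this bound, Algorithm 3 returns the correct topology, which is precisely the defining property of $l_\infty$-radius $\frac{1}{4}$ for general trees.

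There is no real obstacle beyond this substitution: the substantive work is already carried out in Lemma \ref{lemma:RNJGSuff} (the sufficient condition on $\hat{\rho}$) and in Proposition \ref{prop:RNJGApprox} (showing the $\frac{\Delta}{4}$ perturbation bound implies that sufficient condition). The only point I would explicitly flag is that, unlike a generic distance-based algorithm, Algorithm 3 takes the minimum link length as an input parameter, so the radius claim is conditional on feeding the algorithm the exact value $\Delta=\min_{e\in E}d(e)$; the statement of the corollary makes this assumption explicit. I would also note that the corollary only asserts that the algorithm \emph{has} radius $\frac{1}{4}$ (a lower bound on robustness); the matching upper bound, showing $\frac{1}{4}$ cannot be improved upon by any distance-based algorithm for general trees, is a separate claim handled by Proposition \ref{prop:upperboundradius}.
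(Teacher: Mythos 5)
Your proposal is correct and matches the paper's own reasoning: the paper derives the corollary exactly as you do, by setting the input parameter $\Delta = \min_{e\in E}d(e)$ in Proposition \ref{prop:RNJGApprox}, which turns the hypothesis $\max_{i,j\in U}|\hat{d}(i,j)-d(i,j)| < \frac{\Delta}{4}$ into the defining condition for $l_\infty$-radius $\frac{1}{4}$. Your added remarks (that the claim is conditional on knowing the exact minimum link length, and that optimality of $\frac{1}{4}$ is a separate matter treated by Proposition \ref{prop:upperboundradius}) are consistent with the paper's discussion.
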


We have the following conjecture.

\begin{conjecture}
No distance-based algorithm has $l_\infty$-radius greater than $\frac{1}{4}$ for general trees. If
this is true, then the RNJ algorithm (Algorithm 3) achieves the optimal $l_\infty$-radius
$\frac{1}{4}$ for general trees when $\Delta = \min_{e\in E}d(e)$.
\end{conjecture}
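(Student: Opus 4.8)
The natural approach is the adversarial indistinguishability argument that Atteson \cite{Atteson} used for the $\frac12$ bound on binary trees, adapted to exploit the one feature that general trees have and binary trees lack: internal nodes of degree at least four, i.e.\ \emph{polytomies}. The plan is to prove the contrapositive of the upper bound. A distance-based algorithm is a deterministic map from an input distance matrix (and, to match the information available to Algorithm 3, the value $\Delta=\min_{e}d(e)$) to a single output topology. Hence it suffices to show that, for every $r>\frac14$, there exist two general trees $T_1,T_2$ of \emph{distinct} topology, with additive metrics $d_1,d_2$ satisfying $\min_{e}d_1(e)=\min_{e}d_2(e)=\Delta$, such that the open $\ell_\infty$-balls of radius $r\Delta$ about $d_1(U^2)$ and $d_2(U^2)$ overlap, i.e.
\[
\|d_1(U^2)-d_2(U^2)\|_\infty < 2r\Delta .
\]
Any $\hat d$ in the intersection is then a single legal input on which the algorithm would be forced to return both $T_1$ and $T_2$, contradicting radius $r$; letting $r\downarrow\frac14$ gives the bound. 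The second sentence of the conjecture is then immediate: combined with the Corollary above (Algorithm 3 attains $\frac14$), the upper bound makes $\frac14$ optimal.

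For the witness pair I would take $T_2$ to resolve a polytomy of $T_1$ by a single internal edge of length exactly $\Delta$: if $w$ is a node of $T_1$ of degree at least four with two children $i^*,j^*$ among its offspring, let $T_2$ insert a new node $f$ with $i^*,j^*$ as its children and $(w,f)$ of length $\Delta$. By Theorem \ref{theorem:Buneman} the two terminal-distance matrices are distinct, and by (\ref{eq:rho}) they differ only through the depth $\rho$ of the nearest common ancestor of pairs separated by the new edge, i.e.\ in entries involving $i^*$ or $j^*$. The heart of the construction is to pre-compensate: shrinking the pendant lengths of $i^*,j^*$ (and adjusting the root edge) by a common amount, exactly the balancing that underlies the threshold $\frac\Delta2$ in Lemma \ref{lemma:RNJGSuff} and Proposition \ref{prop:RNJGApprox}, one can drive $\max_{i,j\in U}|d_1(i,j)-d_2(i,j)|$ down to $\frac\Delta2$, which is precisely $\frac14(\min_e d_1(e)+\min_e d_2(e))$ when both minima equal $\Delta$.

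The step I expect to be the genuine obstacle, and the reason the statement is posed only as a conjecture, is reconciling the two requirements that the balancing forces into conflict: driving $\|d_1-d_2\|_\infty$ all the way to $\frac\Delta2$ while keeping $\min_e d_1(e)=\min_e d_2(e)=\Delta$. The distance-minimizing choice of pendant lengths makes every pendant and root edge of the polytomy tree $T_1$ strictly longer than its counterpart in $T_2$, so $\min_e d_1(e)>\Delta=\min_e d_2(e)$; the two trees then hand the algorithm \emph{different} values of $\Delta$, and the indistinguishability collapses. Introducing spectator leaves with short pendants to pin both minima at $\Delta$ restores the equality but destroys the balance, inflating the separation (a three-parameter optimization pushes the ratio back up toward $\frac13$). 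The remaining work is therefore to exhibit a single family of configurations, presumably on larger leaf sets or via a limiting argument in which $\|d_1-d_2\|_\infty/(\min_e d_1(e)+\min_e d_2(e))$ tends to $\frac14$ from above while both minima stay equal, and, for the matching lower bound certifying optimality, to prove that no two distinct general trees with equal minimum edge length can be closer than $\frac\Delta2$ in $\ell_\infty$. Establishing this uniform separation over \emph{all} general trees, rather than for the single polytomy-resolution family, is the crux that the conjecture leaves open.
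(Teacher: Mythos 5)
You should first be clear that the paper contains no proof of this statement: it is posed as a conjecture and left open, so there is no ``paper's own proof'' to compare your argument against. The closest result the paper establishes is Proposition \ref{prop:upperboundradius}, which is both weaker and proved by a different technique: it restricts attention to algorithms that employ the threshold criterion (\ref{eq:thresholdcriterion}) and derives two constraints on the threshold $t$ directly from correctness on a single tree --- adding true neighbors forces $r\le \frac{t}{2\Delta}$, rejecting non-neighbors forces $r\le \frac{1}{2}-\frac{t}{2\Delta}$ --- whence $r\le\frac{1}{4}$. Your route, an Atteson-style two-tree indistinguishability argument over the unrestricted class of distance-based algorithms, is the natural attack on the full conjecture and is genuinely different from anything in the paper. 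Your framing is also careful on the one point where carelessness would be fatal: the requirement $\min_e d_1(e)=\min_e d_2(e)=\Delta$ is forced once the algorithm receives $\Delta$ as side input, and without that side information the radius notion degenerates for general trees (a polytomy can be resolved by an arbitrarily short edge, so no purely distance-based algorithm has any positive radius, which would make the Corollary vacuous).

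That said, the gap you identify is real, and it is exactly the open problem, so your proposal does not prove the statement. Concretely, your polytomy-resolution family, executed under the equal-minimum constraint, cannot reach $\frac{\Delta}{2}$: with resolution edge $\epsilon\ge\Delta$ and pendant shrinkage $\delta$, the discrepancies are $\epsilon-\delta$ (pairs separating $i^*$ or $j^*$ from the rest) and $2\delta$ (the pair $i^*,j^*$), so the optimum is $\epsilon=\Delta$, $\delta=\frac{\Delta}{3}$, giving $\|d_1(U^2)-d_2(U^2)\|_\infty=\frac{2\Delta}{3}$. What this construction actually proves is that no ($\Delta$-aided) distance-based algorithm has $l_\infty$-radius greater than $\frac{1}{3}$ for general trees --- a genuine partial result, complementary to the paper's Proposition \ref{prop:upperboundradius} (which gets $\frac{1}{4}$ but only for threshold-based algorithms) --- while leaving the interval $(\frac{1}{4},\frac{1}{3}]$ untouched. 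Closing that interval, either by exhibiting families of tree pairs whose ratio approaches $\frac{1}{4}$, or by showing some non-threshold algorithm beats $\frac{1}{4}$ and thereby refuting the conjecture, is precisely what the authors leave open; your write-up diagnoses this correctly and, as you acknowledge yourself, does not resolve it.
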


We can show that no distance-based algorithm has $l_\infty$-radius greater than $\frac{1}{4}$ if
the threshold (neighbor) criterion (\ref{eq:thresholdcriterion}) is used in the algorithm.

\begin{proposition} \label{prop:upperboundradius}
If the threshold criterion (\ref{eq:thresholdcriterion}) is used, then no distance-based algorithm
has $l_\infty$-radius greater than $\frac{1}{4}$ for general trees.
\end{proposition}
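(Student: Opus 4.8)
The plan is to refute the claim by a two-tree adversary argument: for an arbitrary algorithm obeying the threshold criterion (\ref{eq:thresholdcriterion}) with fixed threshold $t>0$, I will build two additive trees of \emph{different} topology together with a single set of input distances that violates the algorithm's grouping decision. Because the $l_\infty$-radius quantifies over all trees and all additive metrics, exhibiting one such failing instance for every $r>\tfrac14$ forces the radius to be at most $\tfrac14$.

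First I would fix a gadget on the four nodes $s,i,j,k$. In tree $T_1$ the leaves $i,j,k$ share a common parent $f$ with $d_1(f,i)=d_1(f,j)=d_1(f,k)=\Delta$ and $d_1(s,f)=a\ge\Delta$, so that $\rho_1(i,j)=\rho_1(i,k)=a$ and the correct output groups all three as siblings (case (\ref{eq:obsneighbor})). In tree $T_2$ the leaves $i,j$ share a parent $f$, while $k$ hangs one minimum-length link higher at $g=\underline{ik}$, giving $\rho_2(i,j)-\rho_2(i,k)=\Delta$, so the correct output does \emph{not} group $k$ with $\{i,j\}$ (case (\ref{eq:obsnonneighbor})). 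These are precisely the two boundary situations the threshold test must separate.

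The heart of the proof is a sensitivity computation for the score driving the test. From (\ref{eq:RNJGrhofun}),
\begin{equation*}
\hat\rho(i,j)-\hat\rho(i,k)=\tfrac12\big[\hat d(s,j)-\hat d(s,k)-\hat d(i,j)+\hat d(i,k)\big],
\end{equation*}
a combination of four input distances with unit coefficients. Hence if $\|\hat d(U^2)-d(U^2)\|_\infty<r\Delta$ then this score deviates from its true value by strictly less than $2r\Delta$, and by choosing the four signs adversarially one can realize a deviation arbitrarily close to $2r\Delta$ while keeping the input symmetric and positive. Since (\ref{eq:thresholdcriterion}) groups $k$ with $\{i,j\}$ exactly when the score is at most $t$, the whole decision reduces to comparing one perturbable number against $t$. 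Attacking $T_1$ (true score $0$), an adversarial perturbation pushes the score above $t$ and wrongly excludes $k$; this is feasible once $2r\Delta>t$, i.e. $r>t/(2\Delta)$. Attacking $T_2$ (true score $\Delta$), a perturbation pulls the score down to at most $t$ and wrongly includes $k$; this is feasible once $\Delta-2r\Delta<t$, i.e. $r>(\Delta-t)/(2\Delta)$. As the adversary chooses which tree to attack, failure occurs for every $r>\min\{t,\,\Delta-t\}/(2\Delta)$, and this minimum is at most $\tfrac14$ for every $t>0$ (equal to $\tfrac14$ exactly at $t=\Delta/2$). Therefore every threshold-criterion algorithm has $l_\infty$-radius at most $\tfrac14$.

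The step I expect to be the main obstacle is this sign-optimized sensitivity bound: I must confirm that the full $2r\Delta$ swing of the score is simultaneously achievable by a \emph{legitimate} symmetric distance input and is compatible with $(i,j)$ still being the pair selected in Step 2.1 (so that the forced grouping error actually alters the reconstructed topology). Both checks reduce to the same linear estimate and hold for $r<\tfrac13$, so the remaining work — optimizing the bound over $t$ and converting ``failure is feasible'' into ``radius $\le\tfrac14$'' — is routine once the swing is established.
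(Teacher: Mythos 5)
Your proposal is correct and takes essentially the same route as the paper's own proof: both arguments reduce the claim to the two boundary cases of the threshold test (true score $\rho(i^*,j^*)-\rho(i^*,k)=0$ when $k$ is a neighbor, $\geq\Delta$ when it is not), exploit the fact that an $l_\infty$ perturbation of size $r\Delta$ can move the score $\hat{\rho}(i^*,j^*)-\hat{\rho}(i^*,k)$ by up to $2r\Delta$, and combine the two resulting constraints $r\leq \frac{t}{2\Delta}$ and $r\leq \frac{1}{2}-\frac{t}{2\Delta}$ into $r\leq\frac{1}{4}$, with equality at $t=\frac{\Delta}{2}$. The only difference is that you make the paper's implicit adversary explicit, by building the two gadget trees, the sign-optimized perturbations, and the check that the perturbed input is legitimate and still leads the algorithm to test the intended triple.
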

\begin{proof}
Suppose $\mathcal{A}$ is a distance-based algorithm with $l_\infty$-radius $r$ in which the
threshold criterion (\ref{eq:thresholdcriterion}) is used. Let $\Delta = \min_{e\in E}d(e)$.
Therefore, for any tree $T$ associated with any additive metric $d$, if the input distances
$\hat{d}(U^2)$ satisfy:
\begin{eqnarray} \label{eq:l_inftyradiusA}
\max_{i,j\in U}|\hat{d}(i,j)-d(i,j)| & < & r\Delta,
\end{eqnarray}
then $\mathcal{A}$ will return the correct topology of $T$.

Suppose $i^*$ and $j^*$ are neighbors on $T$, and $k$ is a neighbor of them. Then we have
$\rho(i^*,j^*)=\rho(i^*,k)$. Under condition (\ref{eq:l_inftyradiusA}) we know
\begin{eqnarray*}
\hat{\rho}(i^*,j^*)-\hat{\rho}(i^*,k) < 2r\Delta.
\end{eqnarray*}
Since the threshold criterion (\ref{eq:thresholdcriterion}) is used, we need to have
\begin{eqnarray} \label{eq:r*con1}
\hat{\rho}(i^*,j^*)-\hat{\rho}(i^*,k) < 2r\Delta \leq t & \Rightarrow & r \leq \frac{t}{2\Delta}
\end{eqnarray}
to correctly add $k$ as a neighbor of $i^*$ and $j^*$.

Now suppose $k'$ is not a neighbor of $i^*$ and $j^*$. Then we have $\rho(i^*,j^*)-\rho(i^*,k')\geq
\Delta$. Under condition (\ref{eq:l_inftyradiusA}) we know
\begin{eqnarray*}
\hat{\rho}(i^*,j^*)-\hat{\rho}(i^*,k') > \Delta-2r\Delta.
\end{eqnarray*}
Since the threshold criterion (\ref{eq:thresholdcriterion}) is used, we need to have
\begin{eqnarray} \label{eq:r*con2}
\hat{\rho}(i^*,j^*)-\hat{\rho}(i^*,k') > \Delta-2r\Delta \geq t & \Rightarrow & r \leq
\frac{1}{2}-\frac{t}{2\Delta} \nonumber \\
\end{eqnarray}
to correctly not add $k'$ as a neighbor of $i^*$ and $j^*$.

Combining (\ref{eq:r*con1}) and (\ref{eq:r*con2}) we have
\begin{eqnarray} \label{eq:r*con}
r \leq \min\Big(\frac{t}{2\Delta}, \frac{1}{2}-\frac{t}{2\Delta}\Big) & \Rightarrow & r \leq
\frac{1}{4}
\end{eqnarray}
where the upper bound $\frac{1}{4}$ of $r$ is achieved with the threshold $t=\frac{\Delta}{2}$.
\qed
\end{proof}

\subsection{Complexity and Consistency}

The computational complexity of the RNJ algorithm is $O(N^2\log N)$ for a routing tree with $N$
destination nodes. We now show the \emph{consistency} of the RNJ algorithm for general trees
(Algorithm 3), and a similar result holds for binary trees.

Let $\hat{T}_n$ be the inferred tree topology returned by the RNJ algorithm with a sample size $n$
(number of probes to estimate the distances between the terminal nodes). Let
\begin{eqnarray*}
P_n & = & \mathbb{P}\{\hat{T}_n= T\}
\end{eqnarray*}
denote the probability of correct topology inference of the RNJ algorithm.

\begin{proposition} \label{prop:RNJsamplesize}
Let $\Delta \leq \min_{e\in E}d(e)$ be the input parameter of the RNJ algorithm. If
\begin{eqnarray} \label{eq:disterror2}
\mathbb{P}\Big\{|\hat{d}(i,j)-d(i,j)|\geq \frac{\Delta}{4}\Big\} \leq  e^{-c_{ij}(\Delta)n},
\forall i,j\in U,
\end{eqnarray}
where $n$ is the sample size and $c_{ij}(\Delta)$ is a constant, then for a routing tree with $N$
terminal nodes:
\begin{eqnarray}
P_n & \geq & 1- N^2e^{-c(\Delta)n}.
\end{eqnarray}
\end{proposition}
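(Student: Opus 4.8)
The plan is to reduce this probabilistic consistency bound to the deterministic sufficient condition already established in Proposition \ref{prop:RNJGApprox}, and then control the failure probability of that condition by a union bound over the terminal-node pairs. The crucial observation is that Proposition \ref{prop:RNJGApprox} certifies correctness of Algorithm 3 purely in terms of the original terminal-to-terminal distance estimates $\hat{d}(U^2)$: under the standing assumption $\Delta \le \min_{e\in E} d(e)$, as soon as every one of these estimates lies within $\Delta/4$ of its true additive value, the returned topology $\hat{T}_n$ must equal $T$. Hence the random event $\{\hat{T}_n = T\}$ \emph{contains} the event that all $|U| = N$ terminal pairwise estimates are simultaneously accurate to within $\Delta/4$, and it suffices to lower-bound the probability of the latter.

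Concretely, first I would record the event inclusion
\begin{eqnarray*}
\Big\{ \max_{i,j\in U}|\hat{d}(i,j)-d(i,j)| < \tfrac{\Delta}{4} \Big\} & \subseteq & \{\hat{T}_n = T\},
\end{eqnarray*}
which is exactly the content of Proposition \ref{prop:RNJGApprox}. Taking probabilities and passing to complements gives
\begin{eqnarray*}
1 - P_n & \le & \mathbb{P}\Big\{ \max_{i,j\in U}|\hat{d}(i,j)-d(i,j)| \ge \tfrac{\Delta}{4} \Big\}.
\end{eqnarray*}
Next I would apply the union bound, since the maximum exceeds $\Delta/4$ only if some individual pair does, obtaining $\sum_{i,j\in U} \mathbb{P}\{|\hat{d}(i,j)-d(i,j)| \ge \Delta/4\}$. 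Invoking hypothesis (\ref{eq:disterror2}), each summand is at most $e^{-c_{ij}(\Delta)n}$. Setting $c(\Delta) = \min_{i,j\in U} c_{ij}(\Delta)$ dominates every exponent, and counting the pairs (at most $N^2$, the diagonal terms $i=j$ contributing nothing since $\hat{d}(i,i)=d(i,i)=0$) yields $1 - P_n \le N^2 e^{-c(\Delta)n}$, which rearranges to the claimed bound.

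Since each step is an elementary event containment or union bound, there is no serious analytic obstacle; the mathematical content lives entirely in the earlier propositions. The one point requiring genuine care is the linkage in the first step: I must confirm that the hypothesis of Proposition \ref{prop:RNJGApprox} constrains only the original terminal distances $\hat{d}(U^2)$ and \emph{not} the distances to the internally created nodes $f$, so that the per-pair error bound (\ref{eq:disterror2}), which is stated only for $i,j\in U$, is precisely what the argument requires. The remaining bookkeeping is routine: choosing $c(\Delta)$ as the minimum of the $c_{ij}(\Delta)$ over the finitely many terminal pairs, and using the crude count $\binom{N}{2} \le N^2$ to match the constant $N^2$ appearing in the statement.
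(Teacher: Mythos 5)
Your proposal is correct and matches the paper's own proof essentially step for step: both reduce to the deterministic sufficient condition of Proposition \ref{prop:RNJGApprox} (the event inclusion $\{\max_{i,j\in U}|\hat{d}(i,j)-d(i,j)| < \Delta/4\} \subseteq \{\hat{T}_n = T\}$), then apply the union bound over terminal pairs, invoke hypothesis (\ref{eq:disterror2}) on each pair, and take $c(\Delta)=\min_{i,j\in U}c_{ij}(\Delta)$ with the crude count of at most $N^2$ pairs. Your added remarks---that the hypothesis of Proposition \ref{prop:RNJGApprox} involves only the original terminal distances and that diagonal pairs contribute nothing---are correct but not needed beyond what the paper already does implicitly.
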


\begin{proof}
By Proposition \ref{prop:RNJGApprox} we have
\begin{eqnarray*} P_n & \geq & \mathbb{P}\Big\{
\bigcap_{i,j\in U} |\hat{d}(i, j)-d(i, j)| < \frac{\Delta}{4} \Big\}
\nonumber\\
& = & 1- \mathbb{P}\Big\{ \bigcup_{i,j\in U} |\hat{d}(i,j)-d(i,j)|\geq \frac{\Delta}{4} \Big\} \nonumber \\
& \geq & 1-\sum_{i,j\in U} e^{-c_{ij}(\Delta)n} \\
& \geq &  1-N^2e^{-c(\Delta)n}
\end{eqnarray*}
where $C(\Delta)=\min_{i,j\in U}c_{ij}(\Delta)$. \qed
\end{proof}

\begin{proposition} \label{prop:RNJlinklength}
If the input distances $\hat{d}(U^2)$ are consistent (\ie, they converge to the true distances in
probability in the sample size) and the RNJ algorithm returns the correct tree topology, then the
link lengths returned by the RNJ algorithm are consistent.
\end{proposition}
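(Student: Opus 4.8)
The plan is to reduce the claim to the continuous mapping theorem. The crucial observation is that, once the combinatorial decisions made by Algorithm 3 (the selection of the maximizer in Step 2.1 and the threshold test in Step 2.3) are locked to those dictated by the true tree $T$, every quantity the algorithm computes --- the running distances $\hat{d}(k,f)$, the scores $\hat{\rho}(k,f)$, and in particular each output link length $\hat{d}(e)$ --- is obtained from the input distances $\hat{d}(U^2)$ by a finite sequence of additions, subtractions, and multiplications by fixed constants. Hence, along a fixed computation path, each returned link length is an \emph{affine}, and therefore continuous, function of the vector $\hat{d}(U^2)$.

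First I would make the computation path essentially deterministic near the truth. Let $\Delta=\min_{e\in E}d(e)$. By Proposition \ref{prop:RNJGApprox} (via the sufficient condition of Lemma \ref{lemma:RNJGSuff}), whenever $\|\hat{d}(U^2)-d(U^2)\|_\infty<\Delta/4$, Step 2.1 selects true siblings and Step 2.3 selects exactly the remaining true siblings at every iteration; the only residual freedom is the order $\sigma$ in which distinct sibling groups happen to be processed. For each admissible order $\sigma$ I would define, by induction on the number of merges, the affine map $\Phi^\sigma_e$ sending $\hat{d}(U^2)$ to the link length produced for edge $e$. The inductive step is immediate, since the formulas in Steps 2.2--2.4 express each newly created distance or score as an affine combination of quantities already known to be affine in $\hat{d}(U^2)$.

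Next I would identify the limit. By the exactness property of Algorithm 3 (the proposition asserting that additive input yields the correct link lengths; cf. the binary case, Proposition \ref{prop:RNJBExact}), the formulas of Steps 2.2--2.4 are algebraic identities for an additive metric regardless of the merge order, so evaluating any of these maps at the \emph{true} additive distances returns the true link lengths, i.e. $\Phi^\sigma_e(d(U^2))=d(e)$ for every admissible $\sigma$ and every edge $e$. Because there are only finitely many orders $\sigma$ and each $\Phi^\sigma_e$ is continuous, for every $\epsilon>0$ there is a $\delta\in(0,\Delta/4]$ such that $\|\hat{d}(U^2)-d(U^2)\|_\infty<\delta$ forces $|\Phi^\sigma_e(\hat{d}(U^2))-d(e)|<\epsilon$ simultaneously for all $\sigma$ and all $e$. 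On this event the algorithm both recovers $T$ and reports each link length within $\epsilon$ of its true value, whatever order it follows.

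Finally I would invoke consistency of the input distances: $\hat{d}(U^2)\to d(U^2)$ in probability gives $\mathbb{P}\{\|\hat{d}(U^2)-d(U^2)\|_\infty<\delta\}\to 1$, hence $\mathbb{P}\{|\hat{d}(e)-d(e)|<\epsilon\}\to 1$ for each edge, which is exactly consistency of the returned link lengths. The main obstacle is conceptual rather than computational: the input-to-output map of an agglomerative procedure is only piecewise continuous, with discontinuities at the decision boundaries of Steps 2.1 and 2.3. The way around it is precisely to restrict to the $\Delta/4$-neighborhood of $d(U^2)$, where Proposition \ref{prop:RNJGApprox} pins the discrete decisions to those of $T$, so that on an event of probability tending to one the map is a finite family of fixed continuous functions all agreeing with the true link lengths at $d(U^2)$, and the continuous mapping argument closes the proof.
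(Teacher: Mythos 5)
Your proof is correct, but there is nothing in the paper to compare it against: the paper states Proposition \ref{prop:RNJlinklength} without any proof, so your argument supplies a missing piece rather than an alternative to an existing one. Your route --- use Proposition \ref{prop:RNJGApprox} to pin down the discrete decisions of Algorithm 3 on the event $\|\hat{d}(U^2)-d(U^2)\|_\infty<\Delta/4$, observe that with those decisions fixed every returned link length is a linear (hence continuous) function of the input distances, check that each such map sends the true distances to the true link lengths, and finish with a continuity and union argument over the finitely many admissible merge orders --- is sound, and it in fact proves slightly more than asserted: the joint event that the topology is correct \emph{and} every link length is within $\epsilon$ of its true value has probability tending to one, which yields the conditional statement since the conditioning event itself has probability tending to one. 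One point deserves more care. You justify the identity $\Phi^\sigma_e(d(U^2))=d(e)$ by citing the exactness proposition, but that proposition only concerns the run of Algorithm 3 on exact input, which follows a particular order (with exact distances, Step 2.1 always merges a deepest sibling group first), whereas a perturbed input within $\Delta/4$ may merge sibling groups in an order that exact input would never produce. What you actually need --- and what you correctly assert --- is the algebraic fact that the formulas in Steps 2.2--2.4, applied with exact distances to \emph{any} complete sibling group of current leaves, reproduce the exact reduced distances and the exact link lengths; this follows by the same induction used in Lemma \ref{lemma:RNJGSuff} and in the exactness proof, so the issue is one of attribution rather than substance, and your proof stands once that small verification is made explicit.
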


If we use the distance estimators in (\ref{eq:Emetric1}), since they satisfy condition
(\ref{eq:disterror2}) (by Proposition \ref{prop:disterror}) and are consistent, by Proposition
\ref{prop:RNJsamplesize}, the probability of correct topology inference of the RNJ algorithm goes
to 1 exponentially fast in the sample size. If the inferred topology is correct, then by
Proposition \ref{prop:RNJlinklength}, the returned link lengths are also consistent. For network
inference problems where there is a one-to-one mapping between the link performance parameters and
the link lengths (\eg, (\ref{eq:linkpara1})), the link lengths returned by the RNJ algorithm
provide consistent estimators for the link performance parameters (\eg, success rates).

\begin{figure}
\center{\psfig{file=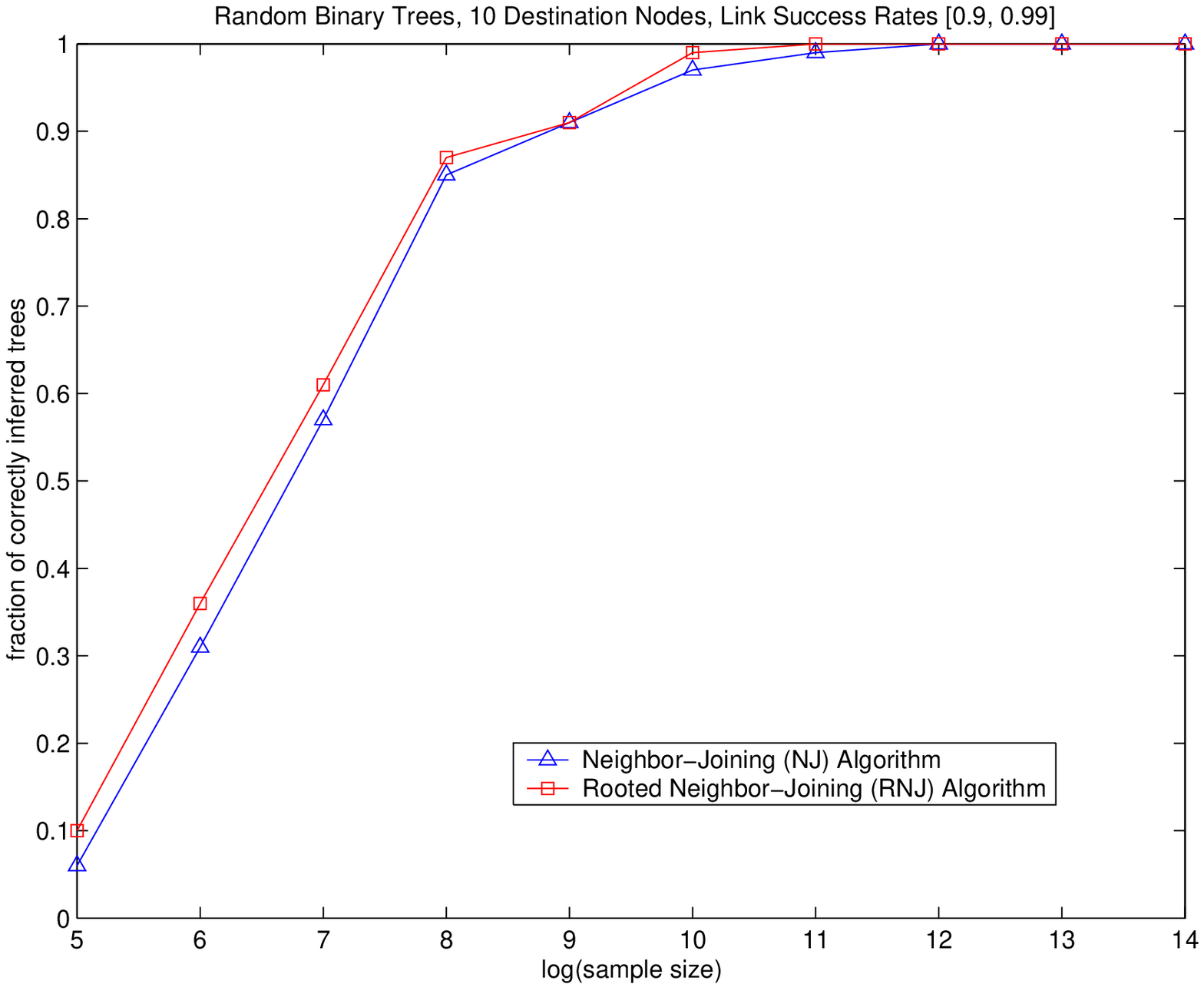, width=\columnwidth}} \caption{Binary trees: fraction of
correctly inferred trees.} \label{fig:MBN10a90topo}

\center{\psfig{file=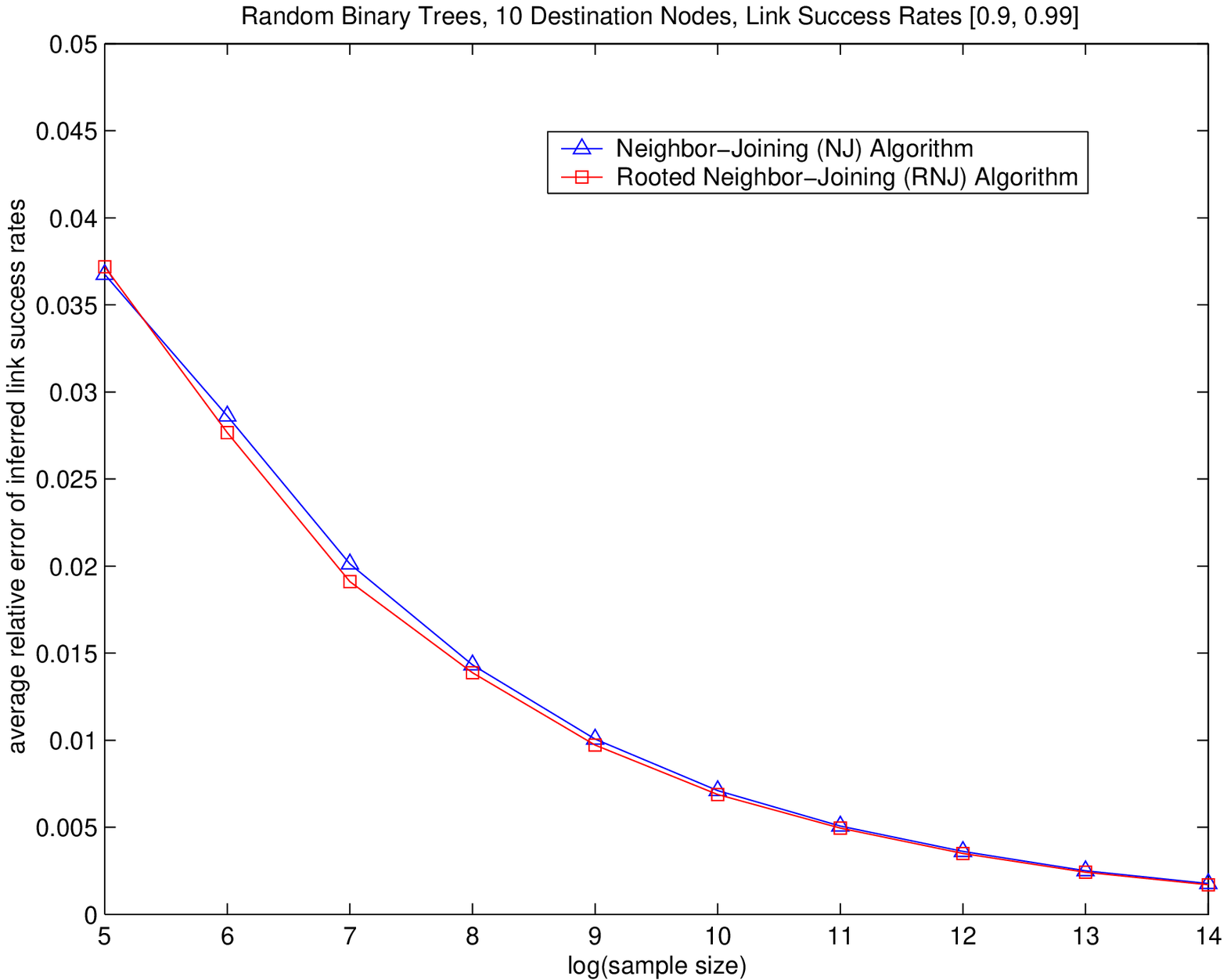, width=\columnwidth}} \caption{Binary trees: average
relative error of inferred link success rates.} \label{fig:MBN10a90link}
\end{figure}

\begin{figure}
\center{\psfig{file=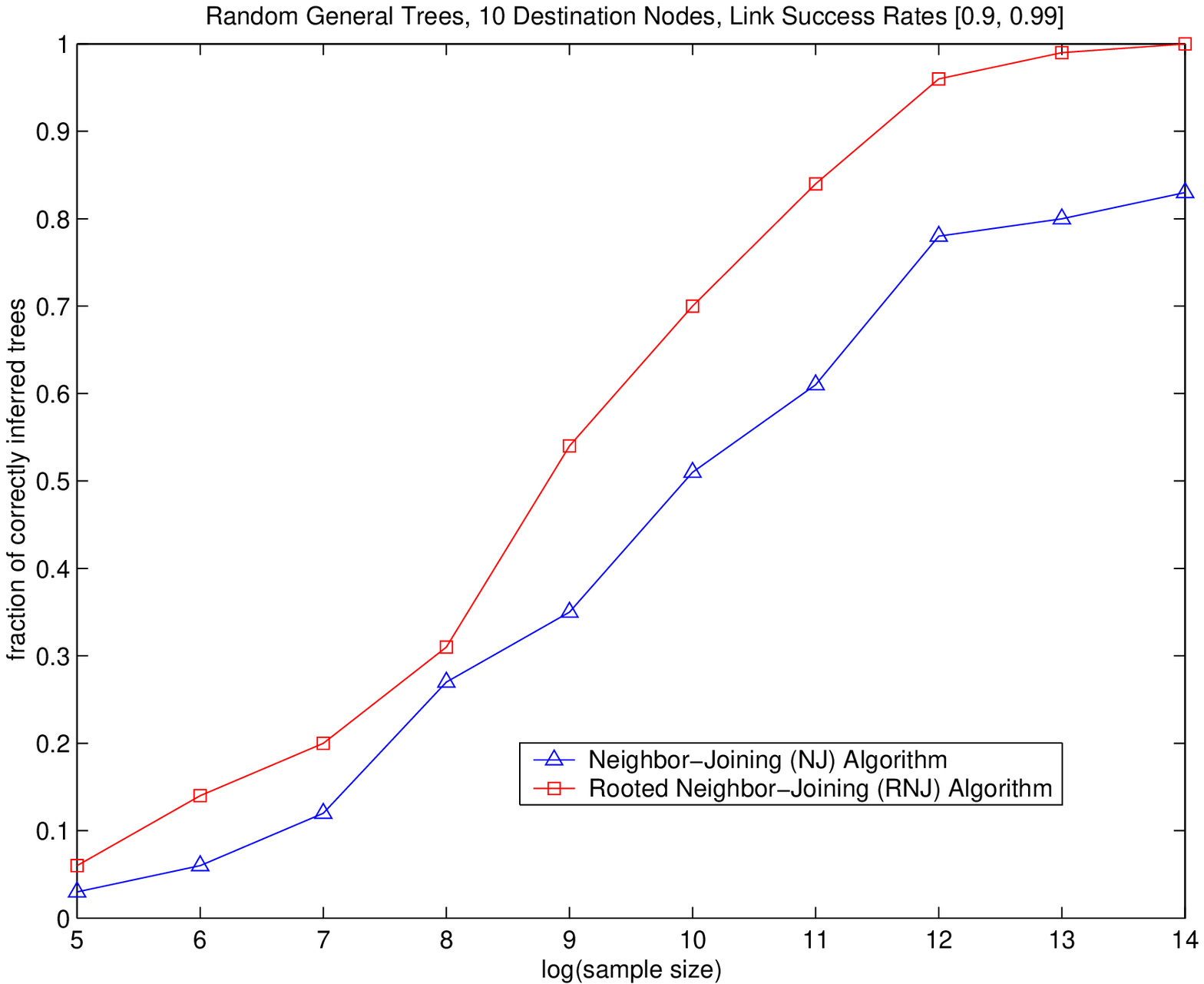, width=\columnwidth}} \caption{General trees: fraction of
correctly inferred trees.} \label{fig:MGN10a90topo}

\center{\psfig{file=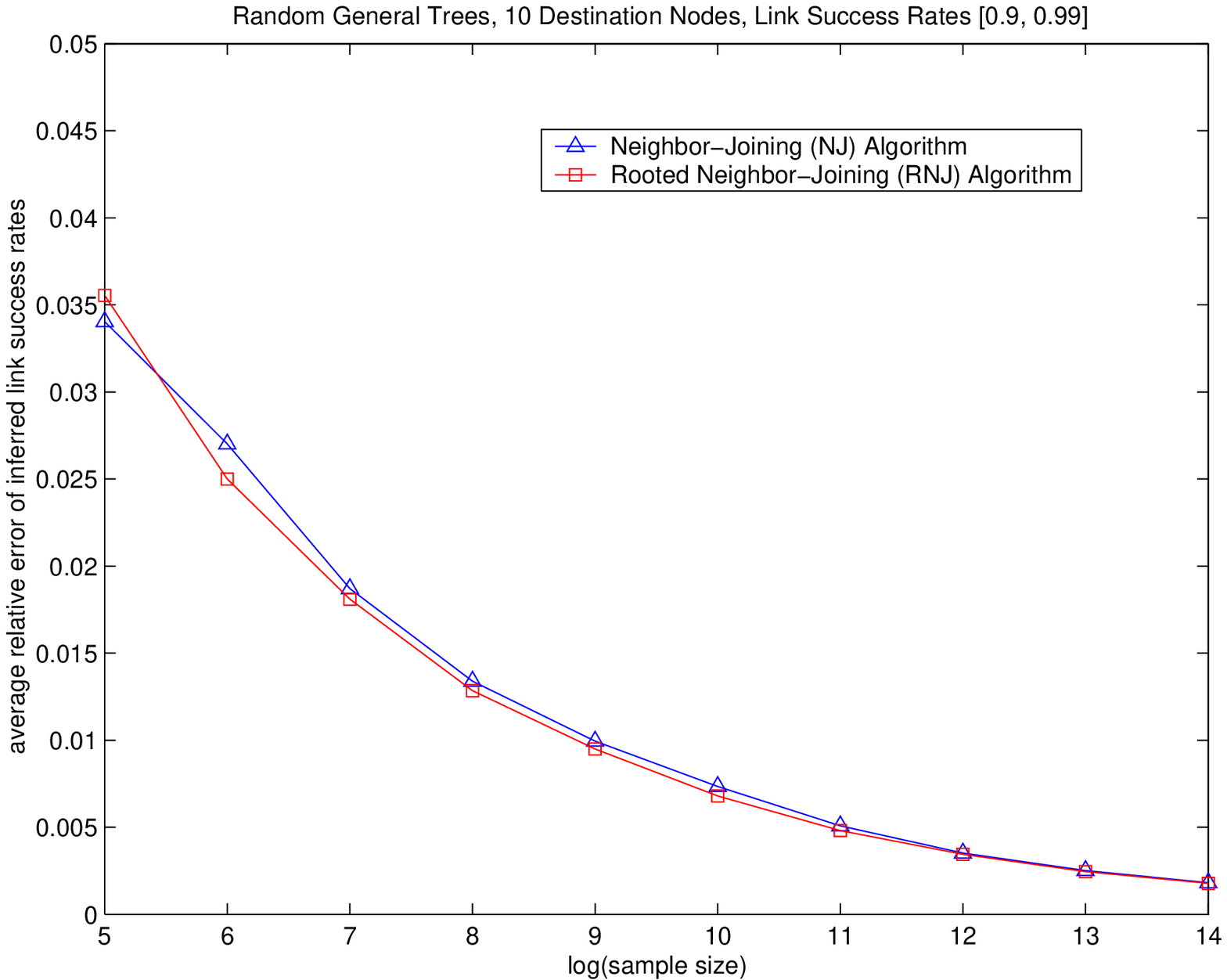, width=\columnwidth}} \caption{General trees: average
relative error of inferred link success rates.} \label{fig:MGN10a90link}
\end{figure}

\section{Simulation Evaluation}   \label{sec:NTsimulation}

In addition to analysis, we demonstrate the effectiveness of the NJ algorithm and the RNJ algorithm
via model simulation. For each experiment, we first randomly generate the tree topology and select
the link success rates in a certain range. The source node then sends a sequence of probes to the
destination nodes. The destination nodes measure the loss outcomes of the probes. We consider both
random binary trees and general trees\footnote{Like the RNJ algorithm, we extend the NJ algorithm
for general trees using a similar threshold neighbor criterion as in
(\ref{eq:thresholdcriterion}).}.

The distances between the terminal nodes are estimated from the empirical distributions of the
observed outcomes at the destination nodes as in (\ref{eq:Emetric1}). We then apply both inference
algorithms to infer the tree topology and link success rates using the estimated distances between
the terminal nodes.

We compare the inferred tree topology with the true tree topology. If the inferred topology is
correct, then we further compare the inferred link success rates $\hat{\alpha}_e$'s with the true
link success rates $\alpha_e$'s. Specifically, for each link $e$, we compute the \emph{relative
error} of the inferred link success rate (compared with the true link success rate) as follows:
\begin{eqnarray*}
\epsilon_e & = & |\frac{\hat{\alpha}_e-\alpha_e}{\alpha_e}|,
\end{eqnarray*}
and we calculate the average relative error among all links on the tree:
\begin{eqnarray*}
\epsilon_{E} & = & \frac{1}{|E|}\sum_{e\in E}\epsilon_e.
\end{eqnarray*}

Each experiment is repeated 100 times. For each inference algorithm, we compute the \emph{fraction}
of correctly inferred trees among all 100 trials (which can be viewed as the \emph{probability} of
correct topology inference of the algorithm), as well as the average value of $\epsilon_{E}$ among
the correctly inferred trees.

The results are shown in Figs. \ref{fig:MBN10a90topo}-\ref{fig:MGN10a90link}. The $x$ axis is in
log scale, \ie, it is $\log_2n$ for a sample size of $n$ probes. As we expect from our analysis,
the NJ algorithm and the RNJ algorithm are \emph{consistent}: the fraction of correctly inferred
trees of both algorithms goes to 1 (exponentially fast) as we increase the sample size, and the
average relative error of the inferred link success rates goes to 0 with an increasing sample size.

For binary trees, we observe that the NJ algorithm and the RNJ algorithm have very similar
performance; while for general trees, the RNJ algorithm has a clear advantage over the NJ algorithm
in terms of the fraction of correctly inferred trees, implying that the RNJ algorithm is more
accurate for topology inference of general trees.

We conduct experiments for trees with different sizes and ranges of link success rates, and we
observe the same pattern of the results.

\section{Multiple-Source Single-Destination Network Inference} \label{sec:NTMSSD}

In this section we study the network inference problem of estimating the routing topology and link
performance from multiple source nodes to a single destination node, in contrast to the
single-source multiple-destination network inference problem we have addressed in the previous
sections.

Again we assume that during the measurement period, the underlying routing algorithm determines a
unique path from a node to another node that is reachable from it. Therefore, the \emph{physical
routing topology} from a set of source nodes to a destination node forms a reversed directed tree.
From the physical routing topology, we can derive a \emph{logical routing tree} which consists of
the source nodes, the destination node, and the \emph{joining} nodes (internal nodes with at least
two incoming links) of the physical routing tree. Each internal node on the logical routing tree
has degree at least three, and a logical link may comprise more than one physical links. An example
is shown in Fig. \ref{fig:physical2}.

\begin{figure}[t]
\center{\psfig{file=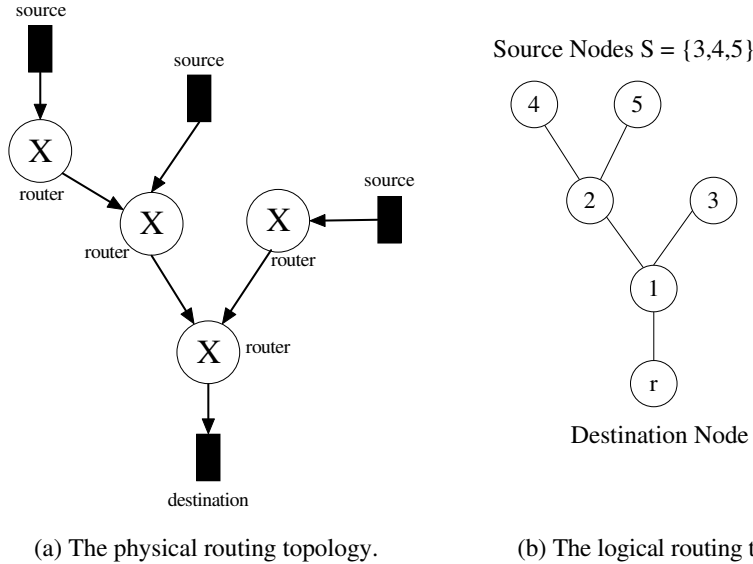, width=0.9\columnwidth}} \caption{The physical routing
topology and the associated logical routing tree with multiple source nodes and a single
destination node.} \label{fig:physical2}
\end{figure}

Let $r$ be a destination node (receiver) in the network, and $S$ be a set of source nodes that will
communicate with $r$. Let $T(S,r)=(V,E)$ denote the (logical) routing tree from nodes in $S$ to
$r$, with node set $V$ and link set $E$. Let $U=S\cup r$ be the set of terminal nodes, which are
nodes with degree one.

Each node $k\in V$ has a \emph{child} $c(k)\in V$ such that $(k,c(k)) \in E$, and a set of parents
$f(k)=\{j \in V: c(j)=k\}$, except that the destination node has no child and the source nodes have
no parents.

For notational simplification, we use $e_k$ to denote link $(k, c(k))$. Each link $e_k$ is
associated with a performance parameter $\theta_k$ (\eg, success rate, delay distribution,
utilization, etc.) that we want to estimate. The network inference problem involves using
measurements taken at the terminal nodes to infer
\begin{itemize}
\item[(1)] the topology of the (logical) routing tree;

\item[(2)] link performance parameters $\theta_{e}$ of the links on the routing tree.
\end{itemize}

\subsection{Reverse Multicast Probing} \label{sec:reversemulticast}

Similar to multicast probing from a source node to a set of destination nodes, we can have
\emph{reverse multicast probing} from a set of source nodes to a single destination node. We
illustrate the idea of reverse multicast using Fig. \ref{fig:physical2}(b) as the example.

Under a reverse multicast probing, source nodes 4 and 5 will send a packet (probe) to their child
node 2. Node 2 may receive both packets, or one of them, or none of them (because of packet loss).
If node 2 receives at least one packet from its parents, it will combine (\eg, concatenate) the
packets and sends the combined packet (as a probe) to its child node 1. Otherwise, node 2 will send
nothing. Similarly, source node 3 will send a packet to its child node 1. Node 1 combines the
packets received from its parents (if any) and sends the combined packet to the destination node
$r$. The whole process is like the reverse process of multicasting a probe from node $r$ to the
other nodes on the routing tree.

For a probe sent from the source nodes in $S$ to the destination node $r$, we define a set of
\emph{link state variables} $Z_e$ for all links on the routing tree $T(S,r)$. Using link loss
inference as the example, $Z_{e}$ is a Bernoulli random variable which takes value 1 with
probability $\alpha_e$ if the probe can go through link $e$, and takes value 0 with probability
$1-\alpha_e\stackrel{\Delta}{=}\bar{\alpha}_e$ if the probe is lost on the link.

For each node $k$ on the routing tree, we use $X_{k}$ to denote the (random) outcome of the probe
sent from node $k$ observed by the destination node $r$. For link loss inference, $X_k$ takes value
1 if $r$ successfully receives the probe sent from node $k$, and takes value 0 otherwise. It is
clear that for any source node $i$,
\begin{equation}
X_i = Z_{e_i} \cdot X_{c(i)} =\prod_{e\in \mathcal{P}(i,r)}Z_e.
\end{equation}

If $0<\alpha_e<1$ for all links, then we can construct an additive metric $d_l$ with link length
\begin{eqnarray} \label{eq:linkpara1rev}
d_l(e) & = & - \log \alpha_e, \phantom{aa} \forall e\in E.
\end{eqnarray}

For any pair of source nodes $i, j\in S$, let $\underline{ij}$ denote their \emph{nearest common
descendant} on $T(S,r)$ (\ie, the descendant of both nodes $i$ and $j$ that is closest to $i$ and
$j$ on the routing tree). For example, in Fig. \ref{fig:physical2}(b), the nearest common
descendant of source nodes 4 and 5 is node 2, and the nearest common descendant of source nodes 3
and 4 is node 1.

Under the spatial independence assumption that the link states are independent from link to link,
for any pair of source nodes $i$ and $j$, we have
\begin{eqnarray*}
\mathbb{P}(X_i=1)    & = &  \prod_{e\in\mathcal{P}(i,r)}\alpha_e, \\
\mathbb{P}(X_j=1)    & = &  \prod_{e\in\mathcal{P}(j,r)}\alpha_e, \\
\mathbb{P}(X_iX_j=1) & = &  \prod_{e\in\mathcal{P}(i,\underline{ij})}\alpha_e
\prod_{e\in\mathcal{P}(j,\underline{ij})}\alpha_e
\prod_{e\in\mathcal{P}(\underline{ij},r)}\alpha_e.
\end{eqnarray*}

Therefore, the distances between the terminal nodes, $d_l(U^2)$, can be computed by
($\mathbb{P}(X_r=1)=1$):
\begin{eqnarray}\label{eq:dmetric1rev}
d_l(i,j) = \log \frac{\mathbb{P}(X_i=1)\mathbb{P}(X_j=1)}{\mathbb{P}^2(X_iX_j=1)},
\phantom{aa}i,j\in U.
\end{eqnarray}

We can see that the mathematical model of a reverse multicast probing on a routing tree (with
multiple source nodes and a single destination node) is similar to the mathematical model of a
multicast probing on a routing tree (with a single source node and multiple destination nodes).
Therefore, the additive-metric framework can be directly applied to analyze and solve the
multiple-source single-destination network inference problem. Specifically, we can construct
additive metrics, estimate the distances between the terminal nodes from end-to-end measurements,
and apply the distance-based algorithms to infer the routing tree topology and the link performance
metrics.

\subsection{Passive Network Monitoring in Wireless Sensor Networks}

Although the current Internet does not support reverse multicast probing because internal nodes
(routers) do not combine packets sent from different source nodes to a destination node, reverse
multicast can be deployed in wireless networks (\eg, \cite{DirectedDiffusion},
\cite{SensorFactorGraph}) for efficiently data collecting.

A typical scenario in wireless sensor networks for data collecting is as follows. A base station (a
receiver) will first propagate an \emph{interest message} into the network via flooding or
constrained/directional flooding. An interest message could be a query message which specifies what
the base station wants (\eg, temperature statistics). A node, when first receives the interest
message from another node, will set that node as its \emph{child} and forward the interest message
to its own neighbors excluding its child. Hence the interest propagation procedure serves both to
disseminate the interest message, and to set up a \emph{reverse path} from each node to the base
station.

When a sensor node which has the data of interest (a source node) receives the interest message, it
can send the data back to the base station using the reverse path (\ie, it sends the data to its
child). Assume each source node has a unique ID (\eg, the geographical location of the node). The
data sent by a source node to the base station also include the source node's ID so the base
station knows from where it receives the data.

If each node selects only one node as its child, \ie, if there is a unique path from a node to the
base station, then we know that the routing topology (undirected version) from the source nodes to
the base station is a tree. We call it a \emph{data collecting tree}. Each internal node on the
tree only needs to maintain the information of a set of parents that it will receive data from, and
a child that it will send data to.

Suppose \emph{directed diffusion} \cite{DirectedDiffusion} is applied on the data collecting tree,
under which an internal node will aggregate (\eg, combine, compress, code, etc.) the data sent from
its parents and then send the aggregated data to its child. Then this process is like a reverse
multicast probing process as we described in Section \ref{sec:reversemulticast}. Using the
algorithms we have developed in this paper, the base station can infer: (1) the topology of the
data collecting tree; (2) the link performance (\eg, packet delivery rate) of every link on the
data collecting tree.

There are several advantages for the base station to do network inference based on the collected
data from the sensor nodes. First, the (internal) sensor nodes do not need to measure and infer the
link performance which can save their resources; while normally the base station has sufficient
battery and computation power so it is competent for the network inference task. Second, this is a
\emph{passive network monitoring framework} so no extra probing traffic is generated. In addition,
since the inference is based on real data transmission, the inferred link performance metrics are
more accurate and meaningful.

\section{Conclusion}   \label{sec:NTsummary}

In this paper we address the network inference problem of estimating the routing topology and link
performance in a communication network. We propose a new, general framework for designing and
analyzing network inference algorithms based on additive metrics using ideas and tools from
phylogenetic inference. The framework is applicable to a variety of measurement techniques. Based
on the framework we introduce and develop several distance-based inference algorithms. We provide
sufficient conditions for the correctness of the algorithms. We show that the algorithms are
computationally efficient (polynomial-time), consistent (return correct topology and link
performance with an increasing sample size), and robust (can tolerate a certain level of
measurement errors). In addition, we establish certain optimality properties of the algorithms
(\ie, they achieve the optimal $l_\infty$-radius) and demonstrate their effectiveness via model
simulation. The framework provides powerful tools that enable us to infer and estimate the
structure and dynamics of large-scale communication networks.


\begin{thebibliography}{0}

\bibitem{RON}
D. G. Andersen, H. Balakrishnan, M. F. Kaashoek, R. Morris, ``Resilient Overlay Networks,"
\emph{Proc. SOSP 2001}, October 2001.

\bibitem{AKMS}
D. Antonova, A. Krishnamurthy, Z. Ma, R. Sundaram, ``Managing a Portfolio of Overlay Paths,"
\emph{Proc. NOSSDAV 2004}, Kinsale, Ireland, June 2004.

\bibitem{Atteson}
K. Atteson, ``The Performance of Neighbor-Joining Methods of Phylogenetic Reconstruction,"
\emph{Algorithmica}, vol. 25, pp. 251-278, 1999.

\bibitem{Barry}
D. Barry and J. A. Hartigan, ``Asynchronous Distance Between Homogenous DNA Squences,"
\emph{Biometrics}, vol. 43, pp. 261-276, June 1987.


\bibitem{Buneman}
P. Buneman, ``The Recovery of Trees from Measures of Dissimilarity," \emph{Mathematics in the
Archaeological and Historical Sciences}, Edinburgh University Press, pp. 387-395, 1971.

\bibitem{MulticastLoss}
R. Caceres, N. G. Duffield, J. Horowitz, D. Towsley, ``Multicast-Based Inference of
Network-Internal Loss Characteristics," \emph{IEEE Transactions on Information Theory}, vol. 45,
no. 7, pp. 2462-2480, Nov. 1999.

\bibitem{NetworkTomography}
R. Castro, M. Coates, G. Liang, R. Nowak, B. Yu, ``Network Tomography: Recent Developments,"
\emph{Statistical Science}, vol. 19, no. 3, pp. 499-517, 2004.

\bibitem{HierarchicalClustering}
R. Castro, M. Coates, R. Nowak, ``Likelihood Based Hierarchical Clustering," \emph{IEEE
Transactions on Signal Processing}, vol. 52, no. 8, pp. 2308-2321, Aug. 2004.

\bibitem{JoeChang}
J. T. Chang, ``Full Reconstruction of Markov Models on Evolutionary Trees: Identifiability and
Consistency," \emph{Mathematical Biosciences}, vol. 137, pp. 51-73, 1996.

\bibitem{UnicastLoss}
M. Coates and R. Nowak, ``Network Loss Inference using Unicast End-to-End Measurement,"
\emph{Proceedings of ITC Conference on IP Traffic, Modelling and Management}, Monterey, CA, Sep.
2000.

\bibitem{InternetTomography}
M. Coates, A. O. Hero III, R. Nowak, B. Yu, ``Internet Tomography," \emph{IEEE Signal Processing
Magazine}, vol. 19, no. 3, pp. 47-65, May 2002.

\bibitem{UnicastTopology}
M. Coates, R. Castro, M. Gadhiok, R. King, Y. Tsang, R. Nowak, ``Maximum Likelihood Network
Topology Identification from Edge-Based Unicast Measurements," \emph{Proc. ACM Sigmetrics 2002},
Jun. 2002.


\bibitem{MulticastTopology}
N. G. Duffield, J. Horowitz, F. Lo Presti, D. Towsley, ``Multicast Topology Inference From Measured
End-to-End Loss," \emph{IEEE Transactions on Information Theory}, vol. 48, no. 1, pp. 26-45, Jan.
2002.

\bibitem{StripedUnicast}
N. G. Duffiled, F. Lo Presti, V. Paxson, D. Towsley, ``Network Loss Tomography Using Striped
Unicast Probes," \emph{IEEE/ACM Transactions on Networking}, vol. 14, no. 4, pp. 697-710, Aug.
2006.


\bibitem{InferPhylogenies}
J. Felsenstein, \emph{Inferring Phylogenies}, Sinauer, New York, 2004.

\bibitem{NJRevealed}
O. Gascuel and M. Steel, ``Neighbor-Joining Revealed," \emph{Molecular Biology and Evolution}, vol.
23, no. 11, pp. 1997-2000, 2006.

\bibitem{DirectedDiffusion}
C. Intanagonwiwat, R. Govindan, D. Estrin, J. Heidemann, F. Silva, ``Directed Diffusion for
Wireless Sensor Networking," \emph{IEEE/ACM Transactions on Networking}, vol. 11, no. 1, pp. 2-16,
February 2003.

\bibitem{SensorFactorGraph}
Y. Mao, F. R. Kschischang, B. Li, S. Pasupathy, ``A Factor Graph Approach to Link Loss Monitoring
in Wireless Sensor Netowrks," \emph{IEEE Journal on Selected Areas in Communications}, vol. 23, no.
4, pp. 820-829, April 2005.

\bibitem{NTisit06}
J. Ni and S. Tatikonda, ``A Markov Random Field Approach to Multicast-Based Network Inference
Problems," \emph{2006 IEEE International Symposium on Information Theory}, Seattle, July 2006.

\bibitem{NTlinkestimation}
J. Ni and S. Tatikonda, ``Explicit Link Parameter Estimators Based on End-to-End Measurements,"
\emph{45th Allerton Conference on Communication, Control, and Computing}, Monticello, Illinois,
September 2007.

\bibitem{NiTopologyInference}
J. Ni, H. Xie, S. Tatikonda, Y. R. Yang, ``Network Routing Topology Inference from End-to-End
Measurements," \emph{Proc. IEEE Conference on Computer Communications (INFOCOM)}, Phoenix, Arizona,
April 2008.


\bibitem{MulticastDelay}
F. L. Presti, N. G. Duffield, J. Horowitz, D. Towsley, ``Multicast-Based Inference of
Network-Internal Delay Distributions," \emph{IEEE/ACM Transactions on Networking}, vol. 10, no. 6,
pp. 761-775, Dec. 2002.

\bibitem{Grouping}
S. Ratnasamy and S. McCanne, ``Inference of Multicast Routing Trees and Bottleneck Bandwidths using
End-to-end Measurements," \emph{Proc. IEEE INFOCOM 1999}, Mar. 1999.

\bibitem{NeighborJoining}
N. Saitou and M. Nei, ``The Neighbor-Joining Method: A New Method for Reconstruction of
Phylogenetic Trees," \emph{Molecular Biology and Evolution}, vol. 4, no. 4, pp. 406-425, 1987.

\bibitem{Phylogenetics}
C. Semple and M. Steel, \emph{Phylogenetics}, volume 22 of Mathematics and Its Applications Series,
Oxford University Press, 2003.

\bibitem{HierarchicalUnicast}
M. Shih, A. O. Hero III, ``Hierarchical Inference of Unicast Network Topologies Based on End-to-End
Measurements," \emph{IEEE Transactions on Signal Processing}, vol. 55, no. 5, pp. 1708-1718, May
2007.

\bibitem{ProspectsNJ}
K. Tamura, M. Nei, S, Kumar, ``Prospects for Inferring Very Large Phylogenies by Using
Neigbhor-Joining Mehtod," \emph{Proc. of the National Academy of Sciences}, vol. 101, no. 30, pp.
11030-11035, July 2004.

\bibitem{UnicastDelay}
Y. Tsang, M. Coates, R. Nowak, ``Network Delay Tomography," \emph{IEEE Transactions on Signal
Processing}, Special Issue on Signal Processing in Networking, vol. 51, no. 8, pp. 2125-2136, Aug.
2003.

\bibitem{Vardi}
Y. Vardi, ``Network Tomography: Estimating Source-Destination Traffic Intensities from Link Data,"
\emph{Journal of the American Statistical Association}, vol. 91, no. 433, pp. 365-377, 1996.

\bibitem{AnonymousRouters}
B. Yao, R. Viswanathan, F. Chang, D. Waddington, ``Topology Inference in the Presence of Anonymous
Routers," \emph{Proc. IEEE INFOCOM 2003}, pp. 353-363, Apr. 2003.

\end{thebibliography}
\end{document}